\newtheorem{theorem}{Theorem}
\newtheorem{mydef}{Definition}
\newtheorem{lemma}{Lemma}
\begin{document}

\title{Resource Allocation for Loss Tolerant Video Streaming in eMBMS}

\author{Sadaf~ul~Zuhra\IEEEauthorrefmark{1},~\IEEEmembership{Student Member,~IEEE,}
        Prasanna~Chaporkar\IEEEauthorrefmark{1},~\IEEEmembership{Member,~IEEE,}
        and~Abhay~Karandikar\IEEEauthorrefmark{1},~\IEEEmembership{Member,~IEEE}% <-this % stops a space
\thanks{\IEEEauthorrefmark{1} The authors are with the Department of Electrical Engineering, Indian Institute of Technology Bombay.
e-mail: \{sadafshah, chaporkar, karandi\}@ee.iitb.ac.in.
A. Karandikar is currently the Director, Indian Institute of Technology Kanpur (on leave from IIT Bombay). e-mail:karandi@iitk.ac.in.}
 \vspace{-1cm}
}% 

\maketitle

% As a general rule, do not put math, special symbols or citations
% in the abstract
\begin{abstract}
Bandwidth hungry video content has become the dominant contributor to the data traffic world over. Cellular networks are constantly evolving to meet the growing traffic demands. Over the past few years, wireless multicast has been garnering a lot of attention as a means of efficient resource utilization. Multicast transmission lets spectral resources to be shared between users streaming the same content. Even though multicast transmission allows to serve multiple users on the same resources, in order to serve all these users successfully, the base station cannot transmit the content at a rate greater than that decodable by the user with the worst channel conditions. In this paper, we propose a way to overcome this bottleneck. Video streaming services can sustain a certain amount of packet loss without any significant degradation in the quality experienced by the users. We leverage this loss tolerant nature of video streaming applications to improve the performance of multicast video services in LTE and 5G. We convert the problem of resource allocation for loss tolerant multicasting into the problem of stabilizing a queueing system. We then propose two throughput optimal Maximum Weight (MW) policies that successfully stabilize the constructed queueing system. However, brute force implementation of MW policies is mostly NP-hard. To overcome this, we propose a maximum weight bipartite matching approach that results in a polynomial time implementation of the proposed policies. We also evaluate the performance of our policies via extensive simulations.  
\vspace{-0.5cm}
\end{abstract}

\section{Introduction} \label{sec:intro}
Video streaming is expected  to form a staggering $75 \%$ of the total mobile data traffic by the year 2021~\cite{ericsson}. With the huge amount of video content being generated online, videos are consuming a major chunk of the available network bandwidth. There is, therefore, a pressing need for resource allocation techniques that can result in efficient bandwidth utilization and reduce the burden on the network resources. Video streaming applications have been found to be tolerant of packet losses as high as $40 \%$~\cite{h264}. Currently, H.264/AVC is the most commonly used video codec over the Internet. For an H.264 encoded video, decoders like FFmpeg and JM can successfully conceal as much as $39 \%$ packet loss with no deterioration in the quality of the video observed by the end user~\cite{h264}. This loss tolerant nature of video streaming applications can be leveraged to build video specific resource allocation schemes which can reduce the congestion in the network significantly. Tolerating some losses in video streams can allow for higher transmission rates and greater flexibility in providing video streaming services. However, this loss tolerant nature of video streaming has not been exploited in the existing multicast streaming literature. \par

Abundant availability of online streaming services also leads to a lot of content being redundantly streamed to several users simultaneously. This means that a large number of resources could be in use for transmitting the same content to different users. Using multicast transmission, the number of resources needed for a single user can be used for catering to all the users streaming the same content. Multicast services in Long Term Evolution (LTE) are termed as evolved Multimedia Broadcast Multicast Services (eMBMS)~\cite{mbms_3gpp}. In eMBMS, the User Equipments (UEs) can subscribe to any of the multicast services available in the network. The UEs subscribed to a multicast service form a single multicast group. Once the eMBMS content is available, the evolved NodeB (eNB) informs the UEs of the service start and begins transmitting the content. The subscribed UEs can join the eMBMS session any time during the service. The eMBMS content is transmitted to the entire multicast group on the same Physical Resource Blocks (PRBs). Therefore, for every UE to receive the eMBMS content successfully, the eNB cannot transmit data to a group at a rate greater than that supported by the UE with the worst channel quality (henceforth referred to as the weakest UE) in the group. The weakest UE, therefore, becomes a bottleneck for the entire multicast group. \par

In this work, we take advantage of the loss tolerant nature of video streams to enhance the performance of video multicast streaming services. We design efficient resource allocation algorithms for loss tolerant video streaming and evaluate their performance through extensive simulations. Allowing for some controlled losses in a multicast stream gives us the flexibility of not having to serve every UE in the multicast group in every sub-frame. As long as we can keep the loss of a UE below a certain threshold, the UE will receive the desired video quality. This means that the weakest UE is no longer a bottleneck all of the time. The transmission rates in some sub-frames can, therefore, be higher than what can be decoded by the weakest UE in the group. This leads to higher throughput and better user satisfaction.

\vspace{-0.3cm}
\subsection{Related Literature}
To the best of our knowledge, none of the existing works deal with resource allocation for loss tolerant video streaming. Therefore, the related literature summarized in this section is primarily composed of the papers that deal with resource allocation for multicast transmission in general. The problem of grouping and resource allocation for multicast streaming has also been extensively studied by us in~\cite{tech_report} and~\cite{self}. In~\cite{tech_report} and~\cite{self}, we have addressed the problems of grouping and optimal resource allocation for multicast streaming in eMBMS. The objective of resource allocation in~\cite{tech_report,self} was to satisfy all the multicast UEs while minimizing the number of PRBs used in doing so. In~\cite{fdps}, the authors propose a Frequency Domain Packet Scheduler (FDPS) for Multimedia Broadcast Multicast Services (MBMS) that maximizes the minimum rate achievable by UEs in a PRB. It uses a somewhat pessimistic approach in that it only minimizes the damage caused by the worst PRB assignment. Moreover, the performance of the proposed policy has only been compared to a blind FDPS policy that uses a blind static allocation that doesn't change over time. This is not a good benchmark to compare with. In~\cite{mung_chiang}, the authors propose a fair and optimal resource allocation for eMBMS. It is assumed that the video content is simultaneously available through unicast as well as through eMBMS and the primary problem seeks to jointly optimize over the grouping of UEs and allocation of resources to unicast and eMBMS. The resource allocation scheme proposed in the paper allocates resources to groups proportional to the number of UEs in the group. While allocating resources, the varying channel conditions of UEs over different PRBs have not been considered.  \par

Resource allocation for MBMS Operation On-Demand has been studied in~\cite{mood}. The authors consider Quality of Experience (QoE) metrics such as user engagement instead of Quality of Service (QoS) metrics like average throughput to be the utility function sought to be maximized by the corresponding resource allocation schemes. All the video streams are assumed to be encoded using Scalable Video Coding (SVC). In~\cite{dash}, the authors have used convex optimization to obtain an optimal solution for multicasting Dynamic Adaptive Streaming over HTTP (DASH)~\cite{3gpp_dash} as well as SVC streaming content over LTE. The problem optimizes the Modulation and Coding Scheme (MCS) and the Forward Error Correction (FEC) code rates used while allocating resources. \par

In~\cite{pricing}, the authors use a pricing based scheme for allocating resources to multicast groups streaming SVC video content. Users are divided into three multicast groups based on the price they pay. UEs that pay the most receive the maximum number of enhancement layers.
In~\cite{tassi}, the authors investigate the use of Random Network Linear Coding (RNLC) for improving the performance of multicast services. They use two different forms of RNLC for multicasting H.264/SVC videos in a generic cellular system. The authors in~\cite{tassi2} deal with optimizing the delivery of network coded SVC content using eMBMS. They make use of Unequal Error Protection (UEP) for ensuring reliability of the multi layer video transmission. They propose a UEP Resource Allocation Model (UEP-RAM) that seeks to maximize the profit to cost ratio of the system. It is shown that the proposed UEP-RAM provides a much better coverage than conventional multi-rate transmission~\cite{afolabi}. Even though SVC provides an interesting new method of video encoding with various benefits, H.264/AVC continues to be the choice of encoding videos over the Internet. Most of the popular streaming platforms like Netflix~\cite{netflix} and YouTube~\cite{youtube} use H.264/AVC or VP9 to encode their videos. Therefore, in this work, we do not consider a layered video coding for the eMBMS video streams.
\par

In~\cite{nest}, the authors propose a scheduling scheme for eMBMS broadcast services that is focused on reducing the average latency of packets in the system. The proposed scheme starts transmission in unicast mode and gradually moves to the broadcast mode as the number of UEs increases. In~\cite{bolte}, the authors deal with efficient broadcasting in LTE using eMBMS. The proposed broadcasting mechanism has been given the name of Broadcast over LTE (BoLTE). Their resource allocation algorithm uses a water filling form of the proportional fair scheduling~\cite{waterfill1,waterfill2}. The authors have evaluated the performance of BoLTE using a WiMAX testbed. \par

The existing literature does not take advantage of the fact that since video streaming is the most useful application of eMBMS,
the inherent loss tolerance of video streams can be leveraged for improving the performance of multicast services. In this paper, we consider the use of eMBMS primarily for video streaming applications. Since video streams can handle losses as high as $40 \%$ without significantly altering the quality perceived by the end-user~\cite{h264}, we exploit this property to design efficient resource allocation policies for eMBMS. A similar approach has also been used in~\cite{atilla} to design a regular service guarantee algorithm for a wireless network with a number of links, only one of which can transmit in a time slot. In our system, the degree of packet loss that can be tolerated by an end user depends upon factors like the video stream subscribed and the channel quality experienced by the user. Allowing for some losses can help in better resource utilization and in controlling congestion in the network during peak traffic hours. It also reduces the dependence of a multicast group on the UE with the worst channel quality as the resource allocation policy is no longer constrained to serve every UE in every sub-frame. \par
In most of the existing multicast literature, the rate achievable by a group is assumed to be the same over all PRBs. This assumption significantly simplifies the resource allocation problem. Without the channel variability over PRBs, all PRBs are equivalent for a UE and the problem reduces to determining the number of PRBs allocated to a multicast group. We, however take into account the fact that due to fading, the channel states and hence the Channel Quality Indicator (CQI) values for a group or user may vary over different PRBs in a single sub-frame as well. Therefore, it is not enough to determine the number of PRBs to be allocated to a group, the identity of the PRBs being allocated also needs to be specified. 
\vspace{-0.5cm}

\subsection{Contributions}
The main contributions of this paper are summarized below: \par
$\bullet$ We address the problem of resource allocation in loss tolerant eMBMS systems in which every user may have a different loss tolerance. This is the first work of its kind that considers a loss tolerant model for video multicast services. \\
$\bullet$ We convert the problem of resource allocation in loss tolerant eMBMS networks to the problem of stabilizing a queueing system. We prove that stabilizing the token queues in the constructed queueing system is equivalent to satisfying the loss requirements of the users. \\
% ensures that the losses encountered by the UEs stay within their tolerable thresholds.\\
$\bullet$ We propose an optimal online Maximum Weight (MW) resource allocation policy for loss tolerant eMBMS networks. The policy doesn't require any statistical information of the channel states of users. Channel states can vary arbitrarily and can also be correlated across users. The proposed policy is optimal in the sense that it can stabilize the queueing system whenever any other policy, including offline policies with complete information of channel states of users, can do so.\\
$\bullet$ We propose another online priority Maximum Weight (MW-priority) resource allocation policy. It provides an improvement over the MW policy in terms of the burstiness of the losses encountered by the users. MW-priority ensures that no user is starved for long periods at a stretch, thus providing a better quality of experience. MW-priority is also optimal in the sense that it can stabilize the queueing system whenever any other policy, including offline policies with complete information of the channel states, can do so. \\
 $\bullet$ We also present a mechanism for a polynomial time implementation of the proposed policies (which are otherwise computationally very expensive and impractical to implement) using Maximum Weight Bipartite Matching (MWBM). \\
$\bullet$ We have evaluated the performance of the proposed schemes via extensive simulations. We have compared their performance to that of the throughput optimal Exponential (Queue length) (EXP-Q) rule~\cite{shakkottai}. EXP-Q does not take the loss requirements of users into consideration. Under this policy, several users encounter losses greater than their thresholds. On the other hand, our proposed policies  successfully meet the loss requirements of all the users. Thus, by taking the loss tolerance of users into consideration, we are able to satisfy a larger number of users. \par

The rest of this paper is organized as follows. We discuss the system model and the problem formulation in Sections~\ref{sec:system} and~\ref{sec:problem} respectively. The construction of the queueing system and related results are presented in Section~\ref{sec:queueing}. In Section~\ref{sec:allocation}, we present the proposed resource allocation algorithms and in Section~\ref{sec:mwbm}, we discuss the MWBM based polynomial time implementation of these algorithms. The details of the simulations done are given in Section~\ref{sec:sim} and the conclusions in Section~\ref{sec:conclusion}. In the interest of preserving the flow of the paper, proofs of all the theorems and lemmas are given separately in the appendix.

\section{System Model} \label{sec:system}
Our system consists of an LTE cell with $L$ different eMBMS services available for transmission to the UEs. There are $M$ UEs in the cell that can subscribe to any of these eMBMS services. Let $[n] = \{1,\ldots,n\}$ and let $|A|$ denote the cardinality of a set $A$. Thus, $[M]$ and $[L]$ denote the set of UEs and the set of multicast groups, respectively.
UEs subscribed to the $i^{th}$ video stream form multicast group $G_i$. The number of UEs in $G_i$ is denoted by $K_i$. Each eMBMS group is allocated one PRB in every sub-frame. A resource allocation policy $\Gamma$ decides which PRB will be allocated to which group in every sub-frame.
We define an allocation vector ${\bf B^\Gamma[t]}$ for policy $\Gamma$ in sub-frame $t$. ${\bf B^\Gamma[t]}$ is a vector of length $L$ that specifies which PRB, if any, has been assigned to each group. Note that $\Gamma$ is completely defined by the value of ${\bf B^\Gamma[t]}$ in every sub-frame $t$. We use $B^\Gamma_i[t]$ to denote the $i^{th}$ entry of vector ${\bf B^\Gamma[t]}$. If $G_i$ is not scheduled for reception in sub-frame $t$, then $B^\Gamma_i[t] = 0$, otherwise $B^\Gamma_i[t]$ takes the value of the PRB number allocated to $G_i$.
The $i^{th}$ eMBMS service requires data to be transmitted to the subscribed UEs at rate $R_i$. For each eMBMS stream, a data packet arrives at the beginning of every sub-frame and is transmitted in the same sub-frame.

\par

The channel states of UEs vary across time and frequency. As a result, the channel experienced by a UE varies from one sub-frame to another and also across PRBs in a sub-frame. Depending on the CQI experienced by UE $k$ in PRB $j$ in sub-frame $t$, there is a certain maximum MCS that an be supported in that PRB for that UE~\cite{cqi} and a corresponding maximum rate that the UE can successfully decode. We denote this rate as $r_{kj}[t]$. Whenever a PRB is allocated to multicast group $G_i$, data is transmitted in that PRB at the corresponding rate $R_i$. Since every UE in the multicast group experiences a different channel quality in the allocated PRB, the maximum rate that UEs can successfully decode is also different for each of them. As a result, a UE in a multicast group may not receive the transmitted content successfully even after a PRB has been assigned to its group. When a UE successfully receives data in a sub-frame, we say that the UE has been served in that sub-frame. Note that a UE being \textit{scheduled} and being \textit{served} is not the same. We distinguish between these two terms below:\\
$\bullet$ We say that a UE has been scheduled in a sub-frame if a PRB is allocated to its corresponding group in that sub-frame. For instance, UE $k \in G_i$ is said to have been scheduled for reception in sub-frame $t$ under policy $\Gamma$ if $B^\Gamma_i[t] \neq 0$. \\
$\bullet$ We say that a UE has been served in a sub-frame if it has been scheduled in that sub-frame and is able to successfully decode the received content. For instance, UE $k \in G_i$ is said to have been served in sub-frame $t$ under $\Gamma$ if $B^\Gamma_i[t] = j \neq 0$ and $R_i \leq r_{kj}[t]$. \par
% {\color{red}Therefore, a UE being scheduled in a sub-frame does not necessarily imply that it
% successfully received data in that sub-frame. In a sub-frame, if a PRB is allocated to the group to which a UE belongs, we say that the UE/group has been \textit{scheduled} in that sub-frame. To indicate whether or not data was successfully received by the UE, we will use the term \textit{served}. So, every time we say that a UE is successfully served in a sub-frame, it means that its group was scheduled for reception and the UE was able to successfully decode the content.}
We denote the loss encountered by UE $k$ under policy $\Gamma$ in sub-frame $t$ by $\ell_k^\Gamma[t]$. For UE $k \in G_i$ and $B^\Gamma_i[t] = j \neq 0$, we have:
  \begin{eqnarray} \label{eq:loss_def}
 \ell^\Gamma_{k}[t] = \begin{cases}
	{0}, & \text{if} \ R_i \leq r_{kj}[t], \\
	{1}, &  \text{otherwise}.
\end{cases}
\end{eqnarray}
For $B^\Gamma_i[t] = 0$, UE $k$ is not scheduled for reception and so, $\ell^\Gamma_{k}[t] = 0$.
Since we assume that a packet arrives at the beginning of every sub-frame and is transmitted in the same sub-frame, the fraction of sub-frames in which a UE is not served under policy $\Gamma$ is equal to the fractional packet loss for that UE under $\Gamma$. Thus, $\ell^\Gamma_{k}[t]$ represents the packet loss encountered by UE $k$ under policy $\Gamma$.
% Thus, UE $k$ belonging to group $G_i$ to which PRB $j$ has been allocated in sub-frame $t$, can successfully receive the eMBMS content only if $r_{kj}[t]$ is greater than or equal to the corresponding transmission rate $R_i$. If $r_{kj}[t]$ happens to be less than $R_i$, the UE does not receive the packet in sub-frame $t$ and is said to have experienced a loss. 

 \par 

As discussed in the previous section, video streams are loss tolerant to a certain extent. The loss tolerance is different for different UEs since UEs experience different channel conditions and are subscribed to different services. The loss tolerance of a UE depends on the channel conditions experienced and the video resolution chosen by it. Higher resolutions mean a lower loss tolerance and vice versa. A UE that chooses automatic resolution adjustment will have the highest loss tolerance. We use $\tilde{\ell}_k$ to denote the fractional loss that can be tolerated by UE $k$.\ $\pmb{\tilde{\ell}} = [\tilde{\ell}_1,\ldots,\tilde{\ell}_M]$ is the loss tolerance vector for the system. In the next section, we formally define the resource allocation problem.
\vspace{-0.2cm}
\section{Problem Definition} \label{sec:problem}
We begin by stating some important definitions that will be used in defining the problem statement.

\begin{mydef} \label{def:RA} \textit{Feasible resource allocation}:
Resource allocation in a sub-frame is said to be feasible if it assigns at most one PRB to each multicast group such that no two groups are assigned the same PRB. In other words,
a feasible resource allocation in sub-frame $t$ corresponds to an allocation vector ${\bf B^\Gamma[t]}$ such that no two non-zero elements in it are equal i.e., if $B^\Gamma_i[t] \neq 0$, then $B^\Gamma_i[t] \neq B^\Gamma_{i'}[t]$ for every $i' \neq i$. 
\end{mydef}

\begin{mydef} \label{def:RAP} \textit{Feasible resource allocation policy}:
A feasible resource allocation policy $\Gamma$ is a policy that chooses a feasible allocation vector in every sub-frame.
\end{mydef}

A resource allocation policy can make use of the knowledge of current channel states of UEs, the allocation information of the previous sub-frames, the loss tolerance of UEs and
the losses encountered by the UEs in the past to make allocation decisions in a sub-frame. It could even be an off-line policy that could make the allocation decisions in advance if the channel conditions of all sub-frames are known apriori.

\begin{mydef} \textit{Average Packet loss}:
 We denote the average packet loss encountered by a UE $k$ under resource allocation $\Gamma$ by $\bar{\ell}_{k}^\Gamma$. 
 It is the total packet loss per unit time and can be mathematically written as follows:
 \begin{equation*}
  \bar{\ell}_{k}^\Gamma = \lim \sup_{T \rightarrow \infty} \frac{1}{T} \sum_{t=1}^T \ell^\Gamma_{k}[t].
 \end{equation*}
%  The loss vector for policy $\Gamma$ will be $\pmb{\bar{\ell}^\Gamma} = [\bar{\ell}_1^\Gamma,\ldots,\bar{\ell}_M^\Gamma]$.
\end{mydef}

\begin{mydef} \textit{Feasible region of a policy}:
The feasible region of a resource allocation policy $\Gamma$, ${\cal L}^\Gamma$, is the set of all loss tolerance vectors, $\pmb{\tilde{\ell}}$s that can be satisfied by $\Gamma$ i.e. $\pmb{\tilde{\ell}} > \pmb{\bar{\ell}^\Gamma}$ with probability (w.p.) $1$.
%  i.e. ${\cal L}^\Gamma = \{ {\bf \tilde{l}} : {\bf \bar{l}^\Gamma} \leq {\bf \tilde{l}} \ \text{w.p.} \ 1\}$.  
\end{mydef}

\begin{mydef} \textit{Feasible region of the system}:
 The feasible region of the system is the set of loss vectors ${\cal L} = \bigcup_\Gamma{\cal L}^\Gamma$ where the union is over all feasible $\Gamma$.
\end{mydef}

\begin{mydef} \textit{Optimal policy}:
 The optimal resource allocation policy $\Gamma^\star$ is a policy whose feasible region is the set of loss vectors ${\cal L}^{\Gamma^\star} = \bigcup_\Gamma{\cal L}^\Gamma$.
\end{mydef}

Our objective here is to determine the optimal resource allocation policy $\Gamma^\star$. We design the optimal resource allocation policy using results from queueing theory. Towards that end, we convert the resource allocation problem in a loss tolerant eMBMS network to the problem of stabilizing a queueing system and prove that stabilizing the resulting system is equivalent to meeting the loss requirements of the UEs.

% \begin{center}
% \begin{table}
% \captionof{table}{Table of notations} \label{notations} 
% \begin{center}
% \begin{tabular}{ | m{2 cm} | m{6 cm}|} 
%  \hline
%  \textbf{Notation} & \textbf{Meaning}\\
%  \hline
%  \hline
%  $[n]$ & $\{1,2, \ldots ,n\}$ \\
%  \hline
%  $M$ & Number of multicast UEs \\
%  \hline
%  $L$ & Number of multicast groups \\
%  \hline
%  $N$ & Number of PRBs in a sub-frame \\
%  \hline
%  $G_i$ & $i^{th}$ eMBMS group \\
%   \hline
%   $R_i$ & Rate requirement of $G_i$ \\
%  \hline
%  $\Gamma$ & Feasible resource allocation policy \\
%  \hline
%  $r_{kj}[t]$ & Maximum rate supportable to UE $k$ on $j^{th}$ PRB in sub-frame $t$ \\
%  \hline
%  ${\bf B^\Gamma[t]}$ & Allocation vector that specifies which PRB, if any, has been assigned to each group \\
%  \hline
%  ${\cal B}$ & Set of all feasible allocation vectors \\
% \hline
% ${\cal C}$ & Set of all possible CQI combinations of the UEs \\
% \hline
%   $\ell_k^\Gamma[t]$ & Loss indicator for UE $k$ in sub-frame $t$ under $\Gamma$ \\
%  \hline
%   $\pmb{\tilde{\ell}}$ & Loss tolerance vector of the system\\
%  \hline
%  $Q_k[t]$ & Length of token queue of UE $k$ in sub-frame $t$ \\
%  \hline
%  $\pmb{ \lambda}$ & Arrival rate vector of the system \\
%  \hline
% $\pmb{\mu^\Gamma}$ & Service vector of the system under $\Gamma$ \\
% \hline
% ${\cal S}$ & Stability region of the queueing system \\
% \hline
%  \end{tabular} 
%  \end{center}
%  \end{table}
% \end{center}
\vspace{-0.3cm}
\section{Queueing System for Resource Allocation in Loss Tolerant $e$MBMS} \label{sec:queueing}
We convert the problem of resource allocation in a loss tolerant eMBMS network to the problem of obtaining a throughput optimal allocation policy for stabilizing a queueing system. Towards this end, we first discuss the construction of the queueing system. 
\vspace{-0.5cm}
\subsection{Construction}
The queueing system consists of token queues corresponding to every UE. At the beginning of a sub-frame, a token arrives in the token queue of UE $k$ w.p. $(1-\tilde{\ell}_k)$ and w.p. $\tilde{\ell}_k$, there is no new arrival. We represent the token arrival process for queue $k$ by an indicator random variable $\lambda_k[t]$. $\lambda_k[t]$ is $1$ if a token arrives in queue $k$ in sub-frame $t$ and $0$ otherwise:
\begin{eqnarray*}
\lambda_{k}[t] = \begin{cases}
{1}, & \text{w.p.} \ 1-\tilde{\ell}_k \\
{0}, & \text{w.p.} \ \tilde{\ell}_k.
\end{cases}
\end{eqnarray*}
The arrivals are independent and identically distributed, therefore, the average rate of token arrival to queue $k$ is $\lambda_k = (1-\tilde{\ell}_k)$. We use $\pmb{ \lambda} = \{\lambda_1,\ldots,\lambda_M\}$ to denote the arrival rate vector of the system.
$Q_{k}[t]$ denotes the length of queue $k$ at the beginning of sub-frame $t$. If UE $k$ is successfully served in a sub-frame, a token departs from its token queue and the queue length is reduced by $1$. 
 \par

 We define another indicator random variable $\mu^\Gamma_k[t]$ that indicates whether or not UE $k$ has been served in sub-frame $t$ under $\Gamma$. $\mu^\Gamma_k[t] =1$ iff $k$ is served under $\Gamma$ in sub-frame $t$. Say $k \in G_i$ and $B_i^\Gamma[t] = j$. Then, $\mu^\Gamma_k[t] = 1$ iff $j \neq 0$ and $R_i \leq r_{kj}[t]$. Otherwise, $\mu^\Gamma_k[t]=0$.
% Every time queue $k$ is scheduled to be served under a policy $\Gamma$, it is served at rate $\mu^\Gamma_k[t] = 1$.
% Recall that $r_{kj}[t]$ denotes the maximum rate that can be successfully received by UE $k$ on PRB $j$ in sub-frame $t$. If PRB $j$ is allocated to the group corresponding to UE $k$ under policy $\Gamma$, the UE will successfully receive data only if $R_i \leq r_{kj}[t]$. Thus, using~(\ref{eq:loss_def}), we can define $\mu^\Gamma_{k}[t] = 1-\ell^\Gamma_k[t]$.
$\pmb{\mu^\Gamma[t]} = [\mu^\Gamma_1[t],\ldots,\mu^\Gamma_M[t]]$ denotes the service vector of the system in sub-frame $t$ under policy $\Gamma$. The stability region of the queueing system thus constructed can be defined as follows:

\begin{mydef}
 Stability region of the queueing system: The queueing system is said to be stable if the expected queue lengths stay finite for every queue i.e. $\sup_t \mathbb{E}[Q_k[t]] < \infty$ for every $k$.
 A resource allocation policy that stabilizes the system is called a stable resource allocation policy. The stability region of a resource allocation policy $\Gamma$ is the set of arrival rate vectors for which the system is stable under $\Gamma$.
 The stability region of the queueing system is the union of the stability regions of all feasible $\Gamma$. We denote it as ${\cal S}$.
 \end{mydef}

\begin{mydef}
Throughput optimality: 
A resource allocation policy $\Gamma$ is said to be throughput optimal if $\Gamma$ can stabilize the queueing system if any other policy can do so. This means that if the queueing system is at all stabilizable, $\Gamma$ will succeed in stabilizing it.
\end{mydef}

The queueing system thus constructed can be maintained at the eNB. Since the eNB knows the loss requirements of the UEs as well as their channel states, it has all the information needed to maintain the queueing system. In the next section, we examine the stability region of the constructed queueing system and relate it to the feasible region of the optimal resource allocation policy.
\vspace{-0.3cm}
\subsection{Feasible Region of the eMBMS System and Stability Region of the Queueing System} \label{subsec:dtmc}
% \vspace{-0.2cm}
In this section, we prove that stabilizing the queueing system constructed in the previous section is equivalent to meeting the loss requirements of the UEs. This will establish the equivalence of the stability region of the constructed queueing system and the feasible region of the optimal resource allocation policy. We begin by defining a few terms. \par
Define a set ${\cal B} = \{B_1, \ldots, B_{|{\cal B}|}\}$ containing all possible PRB allocation vectors to the $L$ groups. The cardinality of this set $|{\cal B}| =\left(_L^N\right) \times L!$. In LTE, channel states are quantified in terms of CQI values. According to Third Generation Partnership Project (3GPP) standards~\cite{cqi}, a total of $15$ CQI values are defined in LTE. Since the number of CQI values are finite, the possible channel states of UEs can take finitely many values. We define a set ${\cal C}$ that contains all possible channel state combinations of all the UEs in the system. For an LTE system, ${\cal C}$ will therefore be a set of $15^M$ CQI vectors, each of size $M$. Let $g$ be the probability distribution over the set ${\cal C}$. That is, the channel state of the system in a sub-frame $t$, $C(t) = C$ w.p. $g(C)$.
We denote by $\pmb{\mu_{B_iC}}$, the vector of service rates of UEs corresponding to an allocation $B_i$ in CQI state $C \in {\cal C}$. Note that $\pmb{\mu_{B_iC}}$s are binary vectors of size $M$. Define a distribution ${\bf w} = \{w_{B_iC}\}$ over the set of $\pmb{\mu_{B_iC}}$s where $w_{B_iC}$ denotes the probability of choosing allocation $B_i$ in channel state $C \in {\cal C}$. Using these definitions, we define the the following LP:
\begin{align*}
LP(\delta): 
 &\hspace{0.5cm} \sum_{C \in {\cal C}} \sum_{B_i \in {\cal B}} g(C) w_{B_iC} \pmb{\mu_{B_iC}} = \pmb{ \lambda}+\delta, \\
 &\hspace{0.5cm} w_{B_iC} \geq 0 \ \forall \ B_i \in {\cal B}, \ C \in {\cal C}, \\
 &\hspace{0.5cm} \sum_{B_i \in {\cal B}} w_{B_iC} = 1, \ \forall \ C \in {\cal C},
\end{align*}
where $\delta$ is a non-negative real number. Denote by $\Lambda(\delta)$ the set of arrival rate vectors $\pmb{\lambda}$ such that the feasible region of $LP(\delta)$ is non-empty.
Define two sets, $\Lambda^\circ = \bigcup_{\delta>0} \Lambda(\delta)$ and $\overline{\Lambda} = \bigcup_{\delta \geq 0} \Lambda(\delta)$.
In the next result, we establish the relation between sets $\Lambda^\circ$, $\overline{\Lambda}$ and stability region of the queueing system ${\cal S}$. This result is essential for relating the feasible region of the optimal resource allocation policy to the stability region of the queueing system.

\begin{theorem} \label{theorem:relation}
 $\Lambda^\circ \subseteq {\cal S} \subseteq \overline{\Lambda}$.
\end{theorem}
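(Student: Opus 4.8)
The statement is a two-sided characterization, so the plan is to prove the two inclusions separately. The lower inclusion $\Lambda^\circ \subseteq {\cal S}$ is an achievability (sufficiency) claim, for which I would exhibit a concrete stabilizing policy and certify stability through a Lyapunov drift argument. The upper inclusion ${\cal S} \subseteq \overline{\Lambda}$ is a converse (necessity) claim, for which I would argue that the long-run time averages generated by \emph{any} stabilizing policy must assemble into a feasible point of $LP(\delta)$ for some $\delta \ge 0$.

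\textbf{Achievability ($\Lambda^\circ \subseteq {\cal S}$).} Fix $\pmb{\lambda} \in \Lambda^\circ$, so there exist some $\delta > 0$ and weights $\{w_{B_iC}\}$ solving $LP(\delta)$. These weights define a \emph{stationary randomized} policy $\Gamma_{\mathrm{SRP}}$: upon observing channel state $C$ in a sub-frame, choose allocation $B_i$ with probability $w_{B_iC}$, independently of the queue lengths and of the past. Since the channel states are i.i.d.\ with law $g$ and the choice is independent of $\pmb{Q}[t]$, the first LP constraint gives $\mathbb{E}[\pmb{\mu^{\Gamma_{\mathrm{SRP}}}}[t] \mid \pmb{Q}[t]] = \sum_{C} g(C) \sum_{B_i} w_{B_iC}\, \pmb{\mu_{B_iC}} = \pmb{\lambda}+\delta$, i.e.\ the expected service exceeds the arrival rate by the uniform margin $\delta$ in every coordinate. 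I would then take $V(\pmb{Q}) = \sum_k Q_k^2$ and, using the recursion $Q_k[t+1] = (Q_k[t]-\mu_k[t])^+ + \lambda_k[t]$ together with the fact that arrivals and services lie in $\{0,1\}$, bound the conditional one-step drift by $\mathbb{E}[V(\pmb{Q}[t+1]) - V(\pmb{Q}[t]) \mid \pmb{Q}[t]] \le B - 2\delta \sum_k Q_k[t]$ for a finite constant $B$. As the drift is negative once $\sum_k Q_k[t]$ is large, the Foster--Lyapunov criterion yields $\sup_t \mathbb{E}[\sum_k Q_k[t]] < \infty$, hence $\sup_t \mathbb{E}[Q_k[t]] < \infty$ for every $k$, so $\Gamma_{\mathrm{SRP}}$ is stable and $\pmb{\lambda} \in {\cal S}$. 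The strictness $\delta>0$ is precisely what makes the drift bound usable, which is why the lower inclusion is stated with $\Lambda^\circ$ rather than $\overline{\Lambda}$.

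\textbf{Converse (${\cal S} \subseteq \overline{\Lambda}$).} Fix $\pmb{\lambda} \in {\cal S}$ and let $\Gamma$ be any (possibly channel-aware, history-dependent, or offline) policy that stabilizes the queue. From the dynamics, $Q_k[t] \ge Q_k[0] + \sum_{s<t}\lambda_k[s] - \sum_{s<t}\mu_k^\Gamma[s]$; dividing by $t$, taking expectations, and using $\mathbb{E}[Q_k[t]]/t \to 0$ (from stability) together with the law of large numbers for the i.i.d.\ arrivals gives the rate-conservation bound $\liminf_{T} \frac{1}{T}\sum_{t\le T}\mathbb{E}[\mu_k^\Gamma[t]] \ge \lambda_k$ for every $k$. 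Next I would encode the behaviour of $\Gamma$ as conditional allocation frequencies: let $w_{B_iC}(T)$ be the (expected) fraction of the sub-frames with channel state $C$ up to time $T$ in which $\Gamma$ chose $B_i$. For each $T$ these satisfy $w_{B_iC}(T)\ge 0$ and $\sum_{B_i} w_{B_iC}(T)=1$, so they lie in the compact product of probability simplices and, along a subsequence $T_n$, converge to some $\{w_{B_iC}^\star\}$ still obeying the nonnegativity and normalization constraints of $LP(\delta)$. Since the fraction of sub-frames in state $C$ tends to $g(C)$, the time-averaged service rate converges along $T_n$ to $\sum_C g(C)\sum_{B_i} w_{B_iC}^\star\, \pmb{\mu_{B_iC}}$, an exactly achievable vector which, by rate conservation, dominates $\pmb{\lambda}$ coordinatewise; identifying this dominance with membership in $\overline{\Lambda}$ (the $\delta \ge 0$ case, reached at the boundary by $\delta=0$) completes the inclusion.

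\textbf{Main obstacle.} The achievability half is essentially routine once the stationary randomized policy is extracted from the LP. The real work is the converse, because it must cover \emph{all} feasible policies simultaneously. The two delicate steps there are (i) deducing the rate-conservation inequality from mere boundedness of $\mathbb{E}[Q_k[t]]$, which requires passing to the limit carefully and invoking the law of large numbers for the i.i.d.\ arrival and channel processes; and (ii) converting the empirical allocation frequencies into a genuine feasible point of the LP via the compactness/subsequence argument, and then reconciling the resulting coordinatewise domination $\pmb{r}\ge\pmb{\lambda}$ with the equality constraint of $LP(\delta)$. This reconciliation at the boundary of the achievable region is exactly what forces the use of $\overline{\Lambda}$ (with $\delta \ge 0$) in the upper inclusion.
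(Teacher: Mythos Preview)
Your proposal is essentially correct and follows the same overall strategy as the paper: a stationary randomized policy plus quadratic Lyapunov drift for $\Lambda^\circ \subseteq {\cal S}$, and a time-averaging argument to extract LP-feasible weights for ${\cal S} \subseteq \overline{\Lambda}$. The achievability half is identical to the paper's Lemma~\ref{lemma:relation1}.

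For the converse, there is a difference worth noting. The paper works along a single almost-sure sample path (explicitly \emph{assuming} the allocation-frequency limits exist, for simplicity) and then invokes the fact that in a stable queue the long-run \emph{departure} rate equals the arrival rate exactly, so the extracted weights $\{v_{B_iC}\}$ satisfy $\sum_{C}\sum_{B_i} g(C)\,v_{B_iC}\,\pmb{\mu_{B_iC}} = \pmb{\lambda}$, i.e.\ they land directly in $\Lambda(0)\subseteq\overline{\Lambda}$. Your route instead bounds the \emph{service} rate from below by $\pmb{\lambda}$ and then tries to convert coordinatewise domination $\pmb{r}\ge\pmb{\lambda}$ into membership in $\overline{\Lambda}$. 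That last step is the one place your argument is incomplete as written: since $\delta$ in $LP(\delta)$ is a scalar, $\pmb{r}\ge\pmb{\lambda}$ does not by itself yield $\pmb{r}=\pmb{\lambda}+\delta\cdot\pmb{1}$ for some $\delta\ge 0$. The paper sidesteps this entirely by tracking departures rather than offered service, which gives exact equality and $\delta=0$. If you adopt that same observation (stable queue $\Rightarrow$ departure rate $=$ arrival rate), your compactness/subsequence argument becomes a fully rigorous variant of the paper's converse, arguably cleaner because it does not need the a priori assumption that the allocation-frequency limits exist.
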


\begin{proof}
 The detailed proof is given in Appendix~\ref{proof:relation}.
\end{proof}
From this point forward, we consider $\Lambda^\circ$ to be the stability region of the queueing system. We now state and prove the following important theorem that relates the feasible region of the optimal resource allocation policy to the stability region of the queueing system.

\begin{theorem} \label{lemma:stable}
The loss requirement of a UE is met iff its token queue in the queueing system is stable. Therefore, the feasible region of the optimal allocation policy $\Gamma^\star$, ${\cal L}^{\Gamma^\star}$ is equivalent to the stability region of the queueing system, $\cal{S}$. i.e. $\pmb{\tilde{\ell}} \in {\cal L}^{\Gamma^\star}$ iff $(\pmb{1}-\pmb{\tilde{\ell}}) \in \cal{S}$. Here, $\pmb{1}$ is a vector of ones of same size as $\pmb{\tilde{\ell}}$.
\end{theorem}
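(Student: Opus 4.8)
The plan is to prove the two directions of the "iff" by connecting the packet-loss definition $\bar{\ell}_k^\Gamma$ to the token-queue dynamics, and then invoke Theorem~\ref{theorem:relation} to transfer stability to the feasible-region statement. The central observation is that the token queue for UE $k$ has arrival rate $\lambda_k = 1-\tilde{\ell}_k$ (tokens arrive when the user \emph{should} be served) and departure indicator $\mu_k^\Gamma[t]$ (a token leaves precisely when the user \emph{is} served). So the queue accumulates the shortfall between the demanded service rate and the delivered service rate, and this shortfall is exactly governed by the loss process $\ell_k^\Gamma[t]$. I would first make this bookkeeping precise before touching any limits.

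\textbf{Step 1 (queue-length identity).} Writing the Lindley recursion $Q_k[t+1] = \max\{Q_k[t] + \lambda_k[t] - \mu_k^\Gamma[t],\,0\}$, I would lower-bound $Q_k[T]$ by the un-reflected sum $\sum_{t=1}^{T}\bigl(\lambda_k[t] - \mu_k^\Gamma[t]\bigr)$. Dividing by $T$ and taking limits, the strong law of large numbers gives $\frac{1}{T}\sum_t \lambda_k[t] \to \lambda_k = 1-\tilde{\ell}_k$ almost surely, while $\frac{1}{T}\sum_t \mu_k^\Gamma[t]$ is the long-run \emph{served} fraction. Comparing this served fraction against the loss definition~(\ref{eq:loss_def}) shows that the time-average of $\mu_k^\Gamma[t]$ over the sub-frames in which $k$ is scheduled equals one minus the long-run loss, so that the long-run served rate is $(1-\bar{\ell}_k^\Gamma)$ in the appropriate sense. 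The upshot is a clean relation tying the drift $\lambda_k - \bar{\mu}_k^\Gamma$ to the gap $\tilde{\ell}_k - \bar{\ell}_k^\Gamma$.

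\textbf{Step 2 (two directions).} For the forward direction (stable $\Rightarrow$ loss met): if queue $k$ is stable, then $\sup_t \mathbb{E}[Q_k[t]]<\infty$, which forces the long-run average drift to be non-positive, i.e. the served rate must keep up with the arrival rate $\lambda_k = 1-\tilde{\ell}_k$; feeding this through Step~1 yields $\bar{\ell}_k^\Gamma \le \tilde{\ell}_k$, so the loss requirement $\tilde{\ell}_k > \bar{\ell}_k^\Gamma$ holds w.p.\ $1$. For the reverse direction (loss met $\Rightarrow$ stable), I would use Theorem~\ref{theorem:relation}: satisfying the loss requirement means the arrival vector $\pmb{1}-\pmb{\tilde{\ell}}$ lies strictly inside the achievable service region, hence in $\Lambda^\circ \subseteq {\cal S}$, which by definition means some feasible policy stabilizes the system. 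Assembling both directions gives $\pmb{\tilde{\ell}}\in{\cal L}^{\Gamma^\star}$ iff $(\pmb{1}-\pmb{\tilde{\ell}})\in{\cal S}$, since $\Gamma^\star$ achieves the union of all feasible regions.

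\textbf{The hard part} will be handling the forward direction rigorously, because stability is an $\mathbb{E}[Q_k[t]]$ statement whereas the loss requirement is an almost-sure $\limsup$ statement, and the Lindley reflection makes $Q_k[T]$ only a \emph{lower} bound for the raw cumulative drift. Bridging the gap cleanly requires either a standard Foster--Lyapunov/rate-conservation argument to convert finite mean queue length into a non-positive almost-sure drift, or an appeal to the fact that a persistently positive expected drift would force $\mathbb{E}[Q_k[t]]\to\infty$, contradicting stability. I expect the bulk of the appendix proof to be devoted to making this drift-to-almost-sure-loss conversion airtight, while the reverse direction is essentially immediate once Theorem~\ref{theorem:relation} is in hand.
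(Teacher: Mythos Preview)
Your proposal is correct in spirit, but the paper's proof organizes the two directions almost oppositely to yours, and this reorganization makes the ``hard part'' you flag disappear.

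For the forward direction (stable $\Rightarrow$ loss met), the paper does \emph{not} run a drift/Foster--Lyapunov argument on an arbitrary stabilizing policy. Instead it uses Theorem~\ref{theorem:relation} (and the convention that $\mathcal{S}$ is identified with $\Lambda^\circ$) to say: if $\pmb{\lambda}=\pmb{1}-\pmb{\tilde{\ell}}$ is stabilizable, then $\pmb{\lambda}\in\Lambda^\circ$, so the randomized policy $\Gamma_\delta$ of Definition~\ref{def:randomized} exists with i.i.d.\ service of mean $\lambda_k+\delta$. For this specific policy, the long-run served fraction exceeds $1-\tilde{\ell}_k$ almost surely by the SLLN, so the loss requirement is met. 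By routing through $\Gamma_\delta$, the paper sidesteps entirely the $\sup_t\mathbb{E}[Q_k[t]]<\infty\Rightarrow$ almost-sure-drift conversion that you identify as the crux; the appendix proof for this direction is three sentences, not the bulk of the work.

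For the reverse direction (loss met $\Rightarrow$ stable), the paper argues directly rather than via Theorem~\ref{theorem:relation}: if some $\Gamma$ meets the loss requirement, then $\Gamma$ serves UE $k$ in strictly more than $1-\tilde{\ell}_k$ fraction of sub-frames, i.e., the service rate exceeds the token arrival rate $\lambda_k$, so the same $\Gamma$ stabilizes queue~$k$. Your route through $\Lambda^\circ\subseteq\mathcal{S}$ works too, but requires arguing that a policy with long-run service rate $>\lambda_k$ yields a feasible $LP(\delta)$ solution for some $\delta>0$---essentially reproving part of Lemma~\ref{lemma:relation2}---whereas the paper just invokes the elementary ``service rate $>$ arrival rate $\Rightarrow$ stable'' fact. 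In short: both approaches are valid, but the paper exploits $\Gamma_\delta$ exactly where you expect difficulty, and uses your ``easy'' rate argument where you invoke Theorem~\ref{theorem:relation}.
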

\begin{proof}
The detailed proof is given in Appendix~\ref{proof:stable}.
\end{proof}

% Lemma~\ref{lemma:stable} leads us to the following equivalence result between the stability region of the queueing system and the feasible region of the optimal resource allocation policy.

% \begin{corollary}
%  The feasible region of the optimal allocation policy $\Gamma^\star$, ${\cal L}^{\Gamma^\star}$ is equivalent to the stability region of the queueing system, $\cal{S}$.
% \end{corollary}
% \begin{proof}
%  The result follows directly from Lemma~\ref{lemma:stable}.
% \end{proof}

We have now established that the stability region of the constructed queueing system is same as the feasible region of the optimal resource allocation policy $\Gamma^\star$. Therefore, from this point onwards, we do not explicitly talk about meeting the loss
requirements of the UEs. Instead, we focus our attention on stabilizing the token queues corresponding to each UE knowing that stabilizing the token queues of UEs will ensure that their respective loss requirements are met. In the next section, we propose maximum weight throughput optimal policies for resource allocation in loss tolerant eMBMS systems.

\section{Proposed Resource Allocation Algorithms} \label{sec:allocation}
 In this section, we propose Maximum Weight (MW) throughput optimal policies for resource allocation in loss tolerant eMBMS networks. We also present their efficient polynomial time implementations in a later section.
\vspace{-0.4cm}
\subsection{Maximum Weight Resource Allocation $(\Gamma_0)$} \label{subsec:BP}
Maximum weight resource allocation $\Gamma_0$ takes scheduling decisions in a sub-frame $t$ based on the token queue lengths $Q_k[t]$s. In sub-frame $t$, $\Gamma_0$ chooses service vector $\pmb{\mu^{\Gamma_0}[t]}$ according to the following optimization problem:

\begin{equation} \label{eqn:mw_objective}
 \pmb{\mu^{\Gamma_0}[t]} = \arg\max_{\mkern-40mu \mu^{\Gamma_0}_{k}[t] \in \pmb{\mu_{B_iC}}}  \sum_{k=1}^{M} Q_{k}[t] \mu^{\Gamma_0}_{k}[t],
\end{equation}

% \begin{equation*}
% ({\bf K^\star}): \label{eq:objective} \sum_{k=1}^{M} Q_{k}[t] \mu^{\Gamma_0}_{k}[t],
% \end{equation*}

where $\mu^{\Gamma_0}_{k}[t]$ is the service rate of UE k in sub-frame $t$ under policy $\Gamma_0$.
$\Gamma_0$ seeks to maximize the sum of the queue lengths of the UEs that are being served in sub-frame $t$. It has a computational complexity of $\mathcal{O}(M\binom{N}{L} L!)$.
% Since the queueing based formulation ensures that as long as the queues are stabilized, the loss requirements of the UEs are met.
We have already established in Section~\ref{sec:queueing} that stabilizing the token queues ensures that the loss requirements of the UEs are met. Therefore, to prove that $\Gamma_0$ can successfully meet the loss requirements of the multicast UEs, it is sufficient to show that $\Gamma_0$ stabilizes the constructed queueing system. We prove this in the following result.

\begin{theorem} \label{theorem:BP}
For any stabilizable arrival rate vector $\pmb{ \lambda}$, $\Gamma_0$ stabilizes the queueing system.
\end{theorem}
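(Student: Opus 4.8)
The plan is to establish throughput optimality through a quadratic Lyapunov drift argument, whose engine is the fact that the per-slot maximization in~\eqref{eqn:mw_objective} lets $\Gamma_0$ dominate every stationary randomized policy. First I would fix the queue dynamics $Q_k[t+1] = \left(Q_k[t] - \mu_k^{\Gamma_0}[t]\right)^+ + \lambda_k[t]$ and take the Lyapunov function $V(\pmb{Q}[t]) = \sum_{k=1}^{M} Q_k[t]^2$. Squaring the update, using $\left(\max(x,0)\right)^2 \le x^2$ and the boundedness $\mu_k^{\Gamma_0}[t],\lambda_k[t] \in \{0,1\}$, gives the per-queue bound
\[
Q_k[t+1]^2 \le Q_k[t]^2 + 2 - 2 Q_k[t]\left(\mu_k^{\Gamma_0}[t] - \lambda_k[t]\right).
\]
Summing over $k$, taking the conditional expectation given $\pmb{Q}[t]$, and using that the i.i.d.\ arrivals are independent of the queue state so that $\mathbb{E}[\lambda_k[t]\mid \pmb{Q}[t]] = \lambda_k$, yields the drift bound
\[
\mathbb{E}\!\left[V(\pmb{Q}[t+1]) - V(\pmb{Q}[t]) \,\middle|\, \pmb{Q}[t]\right] \le 2M - 2\sum_{k=1}^{M} Q_k[t]\left(\mathbb{E}\!\left[\mu_k^{\Gamma_0}[t] \,\middle|\, \pmb{Q}[t]\right] - \lambda_k\right).
\]

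The crux is to show that the expected service chosen by $\Gamma_0$ out-paces the arrivals by a fixed margin. Since $\pmb{\lambda}$ is stabilizable, Theorem~\ref{theorem:relation} places it in $\Lambda^\circ = \bigcup_{\delta>0}\Lambda(\delta)$, so there exist a $\delta>0$ and weights $\{w_{B_iC}\}$ feasible for $LP(\delta)$. These weights define a stationary randomized policy that, in channel state $C$, selects allocation $B_i$ with probability $w_{B_iC}$; by the equality constraint of $LP(\delta)$ its mean service vector equals exactly $\pmb{\lambda} + \delta\pmb{1}$. Because $\Gamma_0$ solves~\eqref{eqn:mw_objective} for the realized channel state in every sub-frame, for each fixed $C(t)=C$ it attains $\sum_k Q_k[t]\,\mu_k^{\Gamma_0}[t] = \max_{B_i}\sum_k Q_k[t]\,[\pmb{\mu_{B_iC}}]_k \ge \sum_{B_i} w_{B_iC}\sum_k Q_k[t]\,[\pmb{\mu_{B_iC}}]_k$. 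Averaging this inequality over the channel distribution $g$ (which is independent of $\pmb{Q}[t]$) and invoking the $LP(\delta)$ equality gives
\[
\sum_{k=1}^{M} Q_k[t]\,\mathbb{E}\!\left[\mu_k^{\Gamma_0}[t] \,\middle|\, \pmb{Q}[t]\right] \ge \sum_{k=1}^{M} Q_k[t]\,(\lambda_k + \delta).
\]

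Substituting this into the drift bound collapses it to $\mathbb{E}[V(\pmb{Q}[t+1]) - V(\pmb{Q}[t]) \mid \pmb{Q}[t]] \le 2M - 2\delta\sum_{k=1}^{M} Q_k[t]$, which is strictly negative whenever $\sum_k Q_k[t] > M/\delta$, i.e.\ outside a bounded set. I would then close the argument with the Foster--Lyapunov drift criterion: a negative drift outside a bounded region forces $\sup_t \mathbb{E}\!\left[\sum_{k} Q_k[t]\right] < \infty$, and hence $\sup_t \mathbb{E}[Q_k[t]] < \infty$ for every $k$, which is precisely the stability defined in the excerpt. The step I expect to be the main obstacle is the domination argument: one must carefully justify that the pointwise optimality of~\eqref{eqn:mw_objective} in each realized channel state survives taking the expectation over $g$ and conditioning on $\pmb{Q}[t]$ (this leans on the independence of $C(t)$ from the queue state), and that the $LP(\delta)$ solution genuinely realizes the mean service $\pmb{\lambda}+\delta\pmb{1}$; the telescoping of the quadratic drift and the final appeal to Foster--Lyapunov are then routine.
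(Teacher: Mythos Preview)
Your proposal is correct and follows essentially the same route as the paper: a quadratic Lyapunov function on the token queues, the comparison of $\Gamma_0$'s per-slot maximum against the stationary randomized policy $\Gamma_\delta$ built from a feasible solution of $LP(\delta)$, and then Foster's criterion to conclude positive recurrence. The only cosmetic differences are the constant in the drift bound ($2M$ versus the paper's $M$, stemming from a slightly looser bound on the squared increment) and the precise threshold defining the bounded set; neither affects the argument.
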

This theorem implies that as long as the system is stabilizable, i.e. there exists some policy $\Gamma$ that can stabilize the queueing system, so can $\Gamma_0$. Note that the policy $\Gamma$ is not restricted to use the same information that is available to $\Gamma_0$. $\Gamma$ could be using information of the past and future allocations and channel conditions to take allocation decisions. Despite that, we claim that $\Gamma_0$ will successfully stabilize the system using only the knowledge of the current state of the 
queueing system to make the scheduling decisions.
\begin{proof}
The detailed proof is given in Appendix~\ref{proof:BP}.
\end{proof}
We now have a MW policy that takes allocation decisions based on the UE token queue lengths and meets the loss requirements of users.
However, in addition to the amount of packet loss encountered by a video stream, the pattern in which the loss occurs is also an important factor in determining the quality experienced by the end user. While a $30 \%$ loss spread over multiple frames may not result in any significant quality impairment, a bursty loss of $30 \%$ of the packets at once can degrade the quality significantly for a while. Such bursts of degraded quality are unacceptable to the end users and may result in UEs choosing to leave the eMBMS session. It is therefore important to ensure that the packet losses are distributed throughout the multicast session. \par

In order to ensure this, a resource allocation policy also needs to restrict the amount of consecutive packet losses encountered by a UE in addition to the long term average packet loss. We propose such a policy which we call as the priority MW policy, in the next section. This policy improves upon MW policy by increasing the scheduling probability of a UE every time it is left unserved. This will ensure that a UE does not remain unserved for long periods which leads to better loss performance and reduces the burstiness of the losses encountered.

\subsection{Priority Maximum Weight Resource Allocation $(\Gamma_P)$} \label{subsec:BP_priority}
Priority Maximum Weight (MW-priority) resource allocation $\Gamma_P$ also takes scheduling decisions in a sub-frame based on the queue lengths $Q_k[t]$s in that sub-frame. However, in this scheme,
we use an additional priority vector to increase the probability of serving a previously unserved queue.
In every sub-frame $t$, $\Gamma_P$ chooses service vector $\pmb{\mu^{\Gamma_P}[t]}$ according the following optimization problem:
\begin{equation}
 \label{eq:objective_priority} 
 \pmb{\mu^{\Gamma_P}[t]} = \arg\max_{\mkern-40mu \mu^{\Gamma_P}_{k}[t] \in \pmb{\mu_{B_iC}}} \sum_{k=1}^M \left(Q_k[t] +(c_k[t]+1)\times s \right)\mu^{\Gamma_P}_k[t],
\end{equation}

where $\mu^{\Gamma_P}_{k}[t]$ is the service rate of UE k in sub-frame $t$ under $\Gamma_P$, $c_k[t]$ is the priority weight ascribed to the token queue of UE $k$ and $s$ is a positive constant. $c_k[t]$ is defined as follows:
\begin{equation*}
 c_k[t] = \begin{cases}
          {0}, & \text{if} \ \mu_k[t-1] = 1,\\
	{\min(c_k[t-1]+1,\kappa)}, &  \text{otherwise}.
         \end{cases}
\end{equation*}
$\kappa$ is the maximum positive integer value that the priority weights can take. Also, $c_k[0] = 0, \forall \ k$. We use ${\bf c}[t] = [c_1[t],\ldots,c_M[t]]$ to denote the vector of priority weights of all the queues in sub-frame $t$. 
$\Gamma_P$ also has a computational complexity of $\mathcal{O}(M\binom{N}{L} L!)$.
% We refer to the $c_k[t]$'s as the priority weights because they increase the weight associated with the queues by a certain factor which in turn increases the probability of that queue being scheduled.
Since increasing $c_k[t]$ increases the contribution of UE $k$ in~(\ref{eq:objective_priority}), it is more likely to be served by the resource allocation policy. \par

When using policy $\Gamma_P$ for resource allocation, the state of the queueing system can be completely defined by the queue lengths of all the token queues in the system and the value of the priority counter for each queue. As before, we denote the state in sub-frame $t$ under policy $\Gamma_P$ by the vector ${\bf Q^{\Gamma_P}[t]} = [Q^{\Gamma_P}_1[t],\ldots,Q^{\Gamma_P}_M[t],\bar{c}[t]]$. Since the scheduling decisions under $\Gamma_P$ in a sub-frame are based only on the state of the system in that sub-frame, the evolution of states of the system form a Discrete Time Markov Chain (DTMC). In the next result we prove that this DTMC is countable, irreducible and aperiodic.

\begin{lemma} \label{lemma:dtmc2}
 The DTMC formed by the evolution of the states under $\Gamma_P$ ${\bf Q^{\Gamma_P}[t]} = [Q^{\Gamma_P}_1[t],\ldots,Q^{\Gamma_P}_M[t],\bar{c}[t]]$ is countable, irreducible and aperiodic.
\end{lemma}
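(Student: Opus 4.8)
The plan is to verify the three properties in turn, with irreducibility carrying essentially all the weight. \textbf{Countability} is immediate: each queue length $Q_k^{\Gamma_P}[t]$ is a nonnegative integer, and by its defining recursion each priority weight $c_k[t]$ lives in the finite set $\{0,1,\ldots,\kappa\}$. Hence the state space is contained in $\mathbb{Z}_{\geq 0}^M \times \{0,\ldots,\kappa\}^M$, which is countable.

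For \textbf{irreducibility} I would fix the reference state $\mathbf{0}=[0,\ldots,0,0,\ldots,0]$ (all queues empty, all counters zero) and show that every state both reaches, and is reached from, $\mathbf{0}$ with positive probability in finitely many sub-frames; composing the two directions then connects any ordered pair of states. Throughout I use two independent sources of controllable randomness, each occurring with positive probability in a slot: the per-queue arrivals (a token arrives in queue $k$ w.p. $1-\tilde{\ell}_k$ and does not w.p. $\tilde{\ell}_k$, both positive when $0<\tilde{\ell}_k<1$), and the channel state $C(t)=C$ w.p. $g(C)$, assuming $g$ has full support on $\mathcal{C}$. Since $N\geq L$, a single allocation can assign a distinct PRB to every group, so under $\Gamma_P$—whose objective~(\ref{eq:objective_priority}) has strictly positive weights and is maximized by serving as many UEs as possible—every group with a decodable UE is scheduled, and the set of served UEs equals the set of UEs whose CQI supports their stream rate $R_i$. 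This served set depends only on $C$, so ranging over $C\in\mathcal{C}$ lets me realize any desired served set, while ranging over the arrival outcomes lets me realize any desired arrival pattern.

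To reach $\mathbf{0}$ from an arbitrary state, I would take several consecutive slots in which the channel lets every UE decode (so every group is scheduled and every UE served) and no token arrives; each such slot decrements every positive queue by one and resets every counter to $0$, so after $\max_k Q_k$ such slots the chain sits at $\mathbf{0}$, an event of positive probability. The reverse direction—reaching an arbitrary target $[n_1,\ldots,n_M,v_1,\ldots,v_M]$ from $\mathbf{0}$—is where the real difficulty lies, because the counters are not free parameters: each slot either resets a counter to $0$ (if the UE is served) or increments it toward $\kappa$ (if not), so no single slot can leave an intermediate value $v_k\in\{1,\ldots,\kappa-1\}$ unchanged, and queue control is thereby coupled to counter control through the service variable. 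I would break this coupling with the observation that if UE $k$ is served and simultaneously receives an arrival then its queue is unchanged, $(Q_k-1)^+ +1=Q_k$ for $Q_k\geq 1$, while its counter resets, whereas if it is neither served nor fed then its queue is again unchanged while its counter increments. Hence I first build the queues to $n_k$ over a block of no-service slots with suitably chosen arrivals, and then, over a further block, drive each counter to $v_k$ by scheduling the appropriate per-UE service pattern while pairing every service with a compensating arrival (and every non-service with no arrival), thereby holding the queue lengths fixed at $n_k$. Each such slot has positive probability, so the full target is reachable from $\mathbf{0}$.

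Finally, \textbf{aperiodicity} follows by exhibiting a self-loop at $\mathbf{0}$: with positive probability the channel lets every UE decode and no token arrives, so every already-empty queue stays empty and every counter is reset to $0$, returning the chain to $\mathbf{0}$ in one step; an irreducible chain possessing a single aperiodic state is aperiodic. \emph{The main obstacle} I anticipate is precisely the queue–counter coupling in the forward reachability step; once the arrival/service pairing above is in place, the remaining verifications are routine positive-probability bookkeeping, the only standing assumptions being $0<\tilde{\ell}_k<1$ for every $k$ and full support of the channel distribution $g$.
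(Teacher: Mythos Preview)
Your proof is correct and follows the same strategy as the paper's: countability from the product state space, irreducibility by first emptying all queues and then rebuilding to the target (queue lengths first, then the counters via the serve-plus-arrival / no-serve-plus-no-arrival trick that freezes the queues), and aperiodicity from a self-loop at the all-zero state. The only differences are cosmetic---the paper constructs a direct five-stage path $\mathbf{Q}\to\mathbf{Q}'$ rather than factoring through $\mathbf{0}$, and you are more explicit than the paper about the channel-support and arrival-probability assumptions ($0<\tilde{\ell}_k<1$, full support of $g$) needed to steer the deterministic policy $\Gamma_P$---and the one loose end on your side is that when a target queue length is $n_k=0$ your ``serve plus arrival'' move should be replaced by ``serve plus no arrival'' to keep that queue at zero while still resetting its counter.
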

\begin{proof}
 The detailed proof is given in Appendix~\ref{proof:dtmc2}.
\end{proof}

We now prove that $\Gamma_P$ is a throughput optimal policy i.e. $\Gamma_P$ will stabilize the queueing system if any other policy can do so.

\begin{theorem} \label{theorem:BP2}
 For any stabilizable arrival rate vector $\pmb{ \lambda}$, $\Gamma_P$ stabilizes the queueing system.
\end{theorem}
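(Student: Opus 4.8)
The plan is to prove stability by exhibiting a negative expected Lyapunov drift for the quadratic Lyapunov function $V({\bf Q}[t]) = \sum_{k=1}^{M} Q_k[t]^2$ and then invoking the Foster--Lyapunov criterion. Lemma~\ref{lemma:dtmc2} already supplies the structural hypotheses (countable, irreducible, aperiodic), so only the drift condition remains to be established. Since $\pmb{\lambda}$ is stabilizable, Theorem~\ref{theorem:relation} places it in ${\cal S}$, and I would work with the open region $\pmb{\lambda} \in \Lambda^\circ$, which furnishes some $\delta > 0$ together with a feasible point $\{w_{B_iC}\}$ of $LP(\delta)$. This point defines a \emph{stationary randomized comparison policy} $\Gamma^\ast$ that, in channel state $C$, selects allocation $B_i$ with probability $w_{B_iC}$; by the equality constraint of $LP(\delta)$ its conditional mean service obeys $\mathbb{E}[\mu_k^{\Gamma^\ast}[t]\mid {\bf Q}[t],{\bf c}[t]] = \lambda_k + \delta$ for every $k$, and this value does not depend on the current queue lengths or priority weights.

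First I would expand the recursion $Q_k[t+1] = (Q_k[t] - \mu_k^{\Gamma_P}[t])^+ + \lambda_k[t]$ using the elementary inequality $((q-b)^+ + a)^2 \le q^2 - 2q(b-a) + a^2 + b^2$, sum over $k$, and take conditional expectations. Because every $\mu_k^{\Gamma_P}[t]$ and $\lambda_k[t]$ is $0/1$ the second-order terms are bounded by $2M$, and since arrivals are independent of the state this gives the one-slot bound $\mathbb{E}[V({\bf Q}[t+1]) - V({\bf Q}[t]) \mid {\bf Q}[t],{\bf c}[t]] \le 2M - 2\sum_k Q_k[t]\big(\mathbb{E}[\mu_k^{\Gamma_P}[t]\mid {\bf Q}[t],{\bf c}[t]] - \lambda_k\big)$. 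To lower-bound $\sum_k Q_k[t]\,\mathbb{E}[\mu_k^{\Gamma_P}[t]\mid\cdot]$ I would use that, in every channel state, $\Gamma_P$ maximizes the priority-weighted objective $\sum_k (Q_k[t] + (c_k[t]+1)s)\mu_k[t]$ and therefore dominates the (random) choice of $\Gamma^\ast$ on that objective; averaging over the channel distribution $g$ yields $\sum_k (Q_k[t]+(c_k[t]+1)s)\,\mathbb{E}[\mu_k^{\Gamma_P}[t]\mid\cdot] \ge \sum_k (Q_k[t]+(c_k[t]+1)s)(\lambda_k + \delta)$.

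The one place where this argument departs from the proof of Theorem~\ref{theorem:BP} --- and the step I expect to demand the most care --- is that $\Gamma_P$ optimizes the \emph{perturbed} weight $Q_k[t] + (c_k[t]+1)s$ rather than the bare $Q_k[t]$ that the drift requires. The resolution is that the priority counters are capped at $\kappa$, i.e. $0 \le c_k[t] \le \kappa$, so separating the priority part of the previous inequality leaves a residual $s\sum_k (c_k[t]+1)\big((\lambda_k+\delta) - \mathbb{E}[\mu_k^{\Gamma_P}[t]\mid\cdot]\big)$ whose absolute value is at most the \emph{state-independent} constant $B' := s(\kappa+1)M(1+\delta)$. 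Hence $\sum_k Q_k[t]\,\mathbb{E}[\mu_k^{\Gamma_P}[t]\mid\cdot] \ge \sum_k Q_k[t](\lambda_k + \delta) - B'$.

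Substituting this into the one-slot bound cancels the $\lambda_k$ terms and gives $\mathbb{E}[V({\bf Q}[t+1]) - V({\bf Q}[t]) \mid {\bf Q}[t],{\bf c}[t]] \le (2M + 2B') - 2\delta\sum_k Q_k[t]$, which is strictly negative whenever $\sum_k Q_k[t] > (M + B')/\delta$. By the Foster--Lyapunov theorem applied to the chain of Lemma~\ref{lemma:dtmc2} this yields positive recurrence; telescoping the unconditioned inequality over $t$ and using $V \ge 0$ bounds $\limsup_{T\to\infty}\frac{1}{T}\sum_{t=1}^{T}\mathbb{E}[\sum_k Q_k[t]] \le (M+B')/\delta < \infty$, which together with positive recurrence delivers $\sup_t \mathbb{E}[Q_k[t]] < \infty$ and hence stability. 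Once the bounded perturbation $B'$ is isolated the remainder is the standard negative-drift conclusion, mirroring the proof of Theorem~\ref{theorem:BP}.
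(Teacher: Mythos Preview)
Your proposal is correct and follows essentially the same route as the paper: a quadratic Lyapunov function, the one-slot drift bound, comparison with the stationary randomized policy $\Gamma_\delta$ built from a feasible solution of $LP(\delta)$, and absorption of the bounded priority perturbation $(c_k[t]+1)s\le(\kappa+1)s$ into a state-independent constant before invoking Foster's theorem. The only cosmetic differences are the form of the queue recursion and the explicit constant you carry (your $B'$ versus the paper's $4Ms$ for $\kappa=1$); the key idea---that capping $c_k$ at $\kappa$ makes the perturbation uniformly bounded so the negative $-2\delta\sum_k Q_k[t]$ term eventually dominates---is identical.
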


Theorem~\ref{theorem:BP2} states that if the queueing system under consideration is at all stabilizable, $\Gamma_P$ will stabilize it.

\begin{proof}
The detailed proof is given in Appendix~\ref{proof:BP2}.
\end{proof}
Thus, $\Gamma_P$ and $\Gamma_0$ are both throughput optimal.
In the next section, we discuss the Exponential (Queue length) rule (EXP-Q) which was proposed in~\cite{shakkottai}. We use the EXP-Q rule as a benchmark for performance evaluation of our policies since it is a well known throughput optimal policy for scheduling multiple flows over a time varying wireless channel. The EXP-Q rule also minimizes the maximum delay encountered in the system~\cite{shakkottai2004}. The rule, however, considers that there is a single channel that can be used by one flow at a time. In the next section, we present a generalization of EXP-Q for use with multicast transmission and with multiple channels available for allocation. We have modified the rule so it can be used for scheduling in the system under consideration in this paper. 
\vspace{-0.3cm}
\subsection{Generalized Exponential (Queue length) rule $(\Gamma_E)$} \label{subsec:exp_q}
EXP-Q rule is a throughput optimal policy~\cite{shakkottai} that schedules a single queue $k$ in a time slot $t$ such that:
\begin{equation} \label{eq:expq_objective}
 k \in \arg\max_{\mkern-25mu k} \gamma_k \mu_k[t] \exp \left( \frac{a_k Q_k[t]}{\beta + [\bar{Q}[t]]^\eta} \right),
\end{equation}
where $\mu_k[t]$ is the rate of service of queue $k$ in sub-frame $t$, $a_k$s, $\gamma_k$s and $\eta$ are constants and $\bar{Q}[t] = (1/N) \sum_k a_kQ_k[t]$. The EXP-Q rule assumes that there is a single time varying channel that is to be shared by several flows. We generalize the EXP-Q rule for use in LTE eMBMS where multiple channels (in the form of PRBs) are available for scheduling the $L$ multicast flows in the cell. Each channel is time varying and different for each UE in the cell. Moreover, in this case, we need to schedule groups of UEs in a single PRB instead of individual UEs since each multicast flow corresponds to a group of UEs subscribed to it. Hence, the EXP-Q rule cannot be used as it is for scheduling in eMBMS. \par

We use $\Gamma_E$ to refer to the modified EXP-Q rule. Since we have multiple channels available and multiple groups can be scheduled for service in a time slot, the policy has to choose an allocation vector instead of a single queue to be scheduled in a time slot.
Specifically, for the problem being considered in this paper, the policy needs to determine the allocation vector ${\bf B^{\Gamma_E}[t]}$ (as defined in Section~\ref{sec:problem}).
We define $\Gamma_E$ as the policy that chooses allocation vector $\pmb{\mu^{\Gamma_E}[t]}$ according to the following optimization problem:
\begin{equation*}
\pmb{\mu^{\Gamma_E}[t]} = \arg\max_{\mkern-40mu \mu^{\Gamma_E}_{k}[t] \in \pmb{\mu_{B_iC}}} \sum_{k=1}^M \gamma_k \mu^{\Gamma_E}_k[t] \exp \left( \frac{a_k Q_k[t]}{\beta + [\bar{Q}[t]]^\eta} \right),
\end{equation*}
where $\mu^{\Gamma_E}_{k}[t]$ is the service rate of UE k in sub-frame $t$ under $\Gamma_E$. $\Gamma_E$ also has a computational complexity of $\mathcal{O}(M\binom{N}{L} L!)$.
The allocation vector ${\bf B^{\Gamma_E}[t]}$ determines which PRB is allocated to which multicast group and hence determines the $\mu^{\Gamma_E}_k[t]$s.
The data is then transmitted to the $i^{th}$ group at the corresponding rate $R_i$ in the PRB allocated to it.
\vspace{-0.3cm}
\subsection{Computational Complexity} \label{subsec:complexity}
As mentioned before, the resource allocation policies discussed in this section have a computational complexity of $\mathcal{O}(M\binom{N}{L} L!)$. These are, therefore, not suitable for use in practical systems unless we can find efficient means of implementing them. It turns out that these policies can infact be implemented in polynomial time using a Maximum Weight Bipartite Matching (MWBM)~\cite{clr}. We discuss the details of this implementation in the next section.
 We first present the algorithm for implementing $\Gamma_0$ using MWBM in detail. The same algorithm can be used for implementing $\Gamma_P$ as well as $\Gamma_E$ by replacing the edge weights of $\Gamma_0$ with those of $\Gamma_P$ and $\Gamma_E$ respectively.

\section{Maximum Weight Bipartite Matching for $e$MBMS Resource Allocation} \label{sec:mwbm}
We make use of MWBM for an efficient polynomial time implementation of the resource allocation policies proposed in Section~\ref{sec:allocation}. The MWBM brings down the computational complexity of their implementation to $\mathcal{O}(NL^2)$. The policies can thus be implemented in polynomial time. We begin with the construction of the underlying bipartite graph which is the same for all the policies except for the edge weights which change according to the policy under consideration. We discuss the implementation for $\Gamma_0$ in detail. The procedure and proof involved can be directly used for $\Gamma_P$ and $\Gamma_E$ as well with modified edge weights. The modifications involved will be specified at
the end of this section. \par

Construct a bipartite graph ${\cal G} = (U,V,E)$ where vertex set $U$ is the set of $L$ multicast groups and vertex set $V$ is the set of $N$ PRBs. We define the service rate of a UE $k \in G_i$ in PRB $j$ in sub-frame $t$ as follows:
\begin{eqnarray*}
\mu_{k}^j[t] = \begin{cases}
{0}, & \text{if} \ R_i > r_{kj}[t] \\
{1}, & \text{otherwise}.
\end{cases}
\end{eqnarray*}
The weight of an edge connecting vertex $i \in U$ to vertex $j \in V$, $w_i^j[t]$ is the sum of the products of the queue lengths of UEs in group $G_i$ and their achievable service rates in PRB $j$ in sub-frame $t$ i.e. 
\begin{equation*}
w_i^j[t] = \sum_{k \in G_i} Q_k[t]\mu_k^j[t].
\end{equation*}
The resulting bipartite graph is illustrated in Figure~\ref{fig:graph}.

\begin{figure}[h]
	\centering
	\scalebox{0.7}{\input{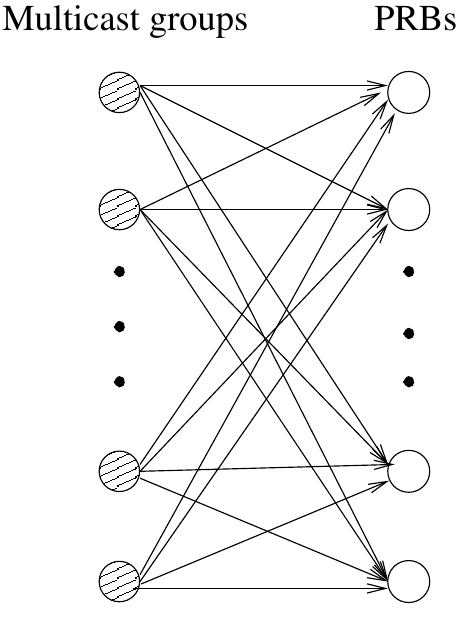_t}}
	\caption{Bipartite graph between multicast groups and PRBs}
	\label{fig:graph}
	\end{figure}
	
A MWBM of ${\cal G}$ that matches every node in $U$ to a unique node from $V$ results in an allocation equivalent to $\Gamma_0$. We prove this in the following result.

\begin{lemma} \label{lemma:mwbm}
 Maximum weight bipartite matching for graph ${\cal G}$ results in resource allocation according to policy $\Gamma_0$.
\end{lemma}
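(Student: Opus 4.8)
The plan is to show that the combinatorial optimization defining $\Gamma_0$ in~(\ref{eqn:mw_objective}) is, term for term, the maximum weight matching problem on $\mathcal{G}$, by exhibiting a weight-preserving bijection between feasible allocations and matchings. First I would invoke Definition~\ref{def:RA}: a feasible allocation assigns at most one PRB to each group and never gives the same PRB to two groups. This is precisely the defining constraint of a matching in $\mathcal{G}$, in which $U$ is the set of groups, $V$ is the set of PRBs, and the edge $(i,j)$ represents the decision ``assign PRB $j$ to group $G_i$''. The set $\mathcal{B}$ of feasible allocation vectors (those assigning each of the $L$ groups a distinct PRB) therefore corresponds exactly to the matchings of $\mathcal{G}$ that saturate $U$: the allocation with $B_i = j_i$ maps to the edge set $\{(i,j_i) : i \in [L]\}$, and this map is a bijection.

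Next I would check that the bijection preserves objective values. Fix a sub-frame $t$ and an allocation with $B_i = j_i$. By the definition of $\mu_k^j[t]$, a UE $k \in G_i$ is served if and only if $R_i \le r_{k j_i}[t]$, so its service rate under $\Gamma_0$ is $\mu^{\Gamma_0}_k[t] = \mu_k^{j_i}[t]$. Grouping the sum in~(\ref{eqn:mw_objective}) by multicast group gives
\[
\sum_{k=1}^{M} Q_k[t]\,\mu^{\Gamma_0}_k[t] = \sum_{i=1}^{L}\sum_{k \in G_i} Q_k[t]\,\mu_k^{j_i}[t] = \sum_{i=1}^{L} w_i^{j_i}[t],
\]
which is exactly the total weight of the corresponding matching. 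Hence maximizing the $\Gamma_0$ objective over $\mathcal{B}$ is identical to maximizing total edge weight over $U$-saturating matchings, and the matching returned by MWBM induces an allocation solving~(\ref{eqn:mw_objective}).

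The only point requiring care is reconciling ``maximum weight matching'' with ``$U$-saturating maximum weight matching''. Here I would use that every weight $w_i^j[t] = \sum_{k\in G_i} Q_k[t]\mu_k^j[t]$ is non-negative because queue lengths and service rates are non-negative; consequently, whenever $N \ge L$ any matching can be extended to one saturating $U$ without decreasing its weight, so the global maximum weight matching may always be taken to saturate $U$, matching the form computed by MWBM. I expect this edge-case bookkeeping --- together with cleanly stating the allocation-to-matching dictionary --- to be the only real work; once the bijection and the displayed identity are in place, the equivalence of the MWBM output and the $\Gamma_0$ allocation is immediate.
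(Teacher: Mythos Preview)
Your proposal is correct and follows essentially the same approach as the paper: establish that feasible allocations correspond to matchings in $\mathcal{G}$ (via Definition~\ref{def:RA}) and then verify that the $\Gamma_0$ objective decomposes group-by-group into the sum of the selected edge weights. Your treatment is in fact more careful than the paper's, which implicitly assumes $U$-saturation without justifying it; your non-negativity argument for the weights cleanly closes that gap.
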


\begin{proof}
The detailed proof is given in Appendix~\ref{proof:mwbm}.
\end{proof}
 
  The same MWBM can be used for implementing $\Gamma_P$ and $\Gamma_E$ by changing the edge weights. For $\Gamma_P$ we will have:
 \begin{equation} \label{eq:weight_priority}
  w_i^j[t] = \sum_{k \in G_i} \left(Q_k[t] +(c_k[t]+1)\times s \right)\mu_k^j[t].
 \end{equation}
For $\Gamma_E$ the edge weights will change to:
 \begin{equation} \label{eq:weight_expq}
  w_i^j[t] = \sum_{k \in G_i} \gamma_k \mu_k^j[t] \exp \left( \frac{a_k Q_k[t]}{\beta + [\bar{Q}[t]]^\eta} \right).
 \end{equation}
 The same proof as in Lemma~\ref{lemma:mwbm} follows to show that the MWBM for graph ${\cal G}$ with the edge weights defined in (\ref{eq:weight_priority}) and (\ref{eq:weight_expq})
 results in the implementation of resource allocation policies $\Gamma_P$ and $\Gamma_E$ respectively.
 In the next section, we present the results of the simulations performed for evaluating the performance of the proposed resource allocation schemes.

    \begin{center}
\begin{table} \vspace{-0.5cm}
\captionof{table}{System Simulation parameters \cite{parameters}} \label{parameters} 
\begin{center}
\begin{tabular}{ | m{3 cm} | m{5 cm}|} 
 \hline
 \textbf{Parameters} & \textbf{Values}\\
 \hline
 \hline
 System bandwidth & $20$ MHz \\
 \hline
 eNB cell radius & $150$ m \\ 
 \hline
 Path loss model & L = $128.1 + 37.6 \log10(d)$, $d$ in kilometers\\ 
 \hline
 Lognormal shadowing & Log Normal Fading with $10$ dB standard deviation \\
 \hline
 White noise power density & $-174$ dBm/Hz \\
 \hline
 eNB noise figure & $5$ dB \\
 \hline
 eNB transmit power & $46$ dBm \\
 \hline
 Number of PRBs & $100$ per sub-frame\\
\hline
 \end{tabular}
 \end{center}
 \end{table}
 \vspace{-0.5cm}
\end{center}
 \vspace{-0.4cm}

\section{Simulations} \label{sec:sim}
For studying the performance of the proposed allocation algorithms, we use these algorithms for allocation in LTE eMBMS. We consider an LTE cell with $L$ different eMBMS video streams and UEs distributed uniformly at random through the cell. All UEs are subscribed to one of the eMBMS video streams. UEs subscribed to a particular eMBMS service receive the relevant content on common PRBs. We have used a MATLAB~\cite{matlab} based LTE simulator designed in~\cite{mahima}. In order to create LTE specific physical layer conditions, we have 
   created channels using the models recommended by 3GPP in~\cite{parameters}. The SNR to CQI and CQI to rate mapping has been done using the tables specified in 3GPP documents~\cite{parameters}. Other relevant simulation parameters are given in Table~\ref{parameters}.\par
   
   Every multicast service has a certain rate requirement and every UE can tolerate a certain amount of packet loss. The loss tolerable by a UE depends on the quality of video required by it, what its average channel conditions are like and the multicast group it belongs to. PRBs are allocated to the multicast groups according to the MW, MW-priority as well as the modified EXP-Q resource allocation policies. We then observe the packet loss encountered by the UEs under each of these policies. We compare the performance of the proposed schemes with the modified EXP-Q rule~\cite{shakkottai} which is proven to be throughput optimal~\cite{shakkottai}. The simulations have been carried out for $10^6$ sub-frames and losses have been calculated as the fraction of sub-frames in which a UE is not served. \par

    \begin{figure*}[htb]
    \vspace{-0.6cm}
    \centering
    \begin{subfigure}[htb]{0.31\textwidth}
        \centering
\includegraphics[scale = 0.28]{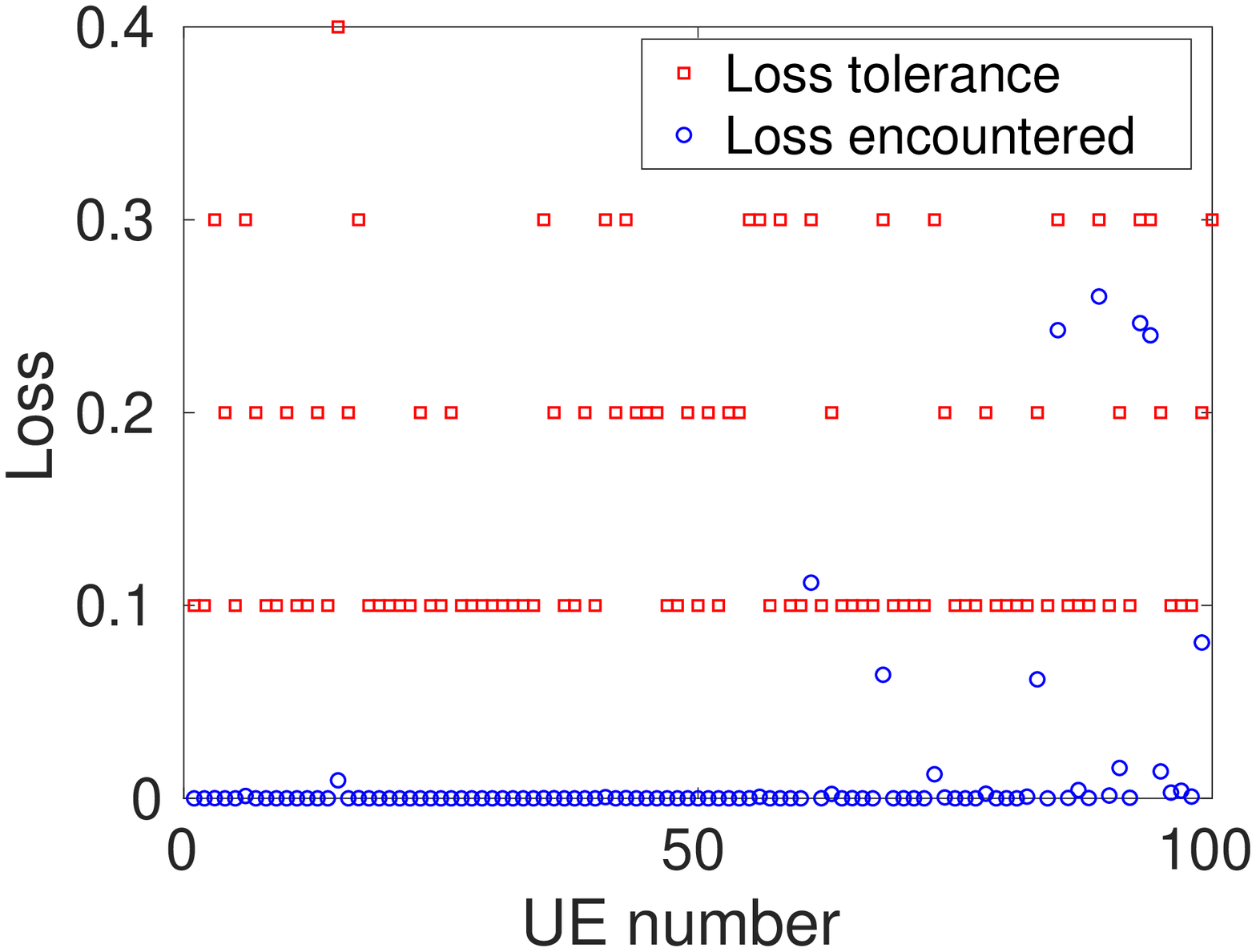} %earlier : [width = 3.5in]
\caption{}
\label{loss_mw}
    \end{subfigure}%
    ~ 
    \begin{subfigure}[htb]{0.31\textwidth}
        \centering
\includegraphics[scale = 0.28]{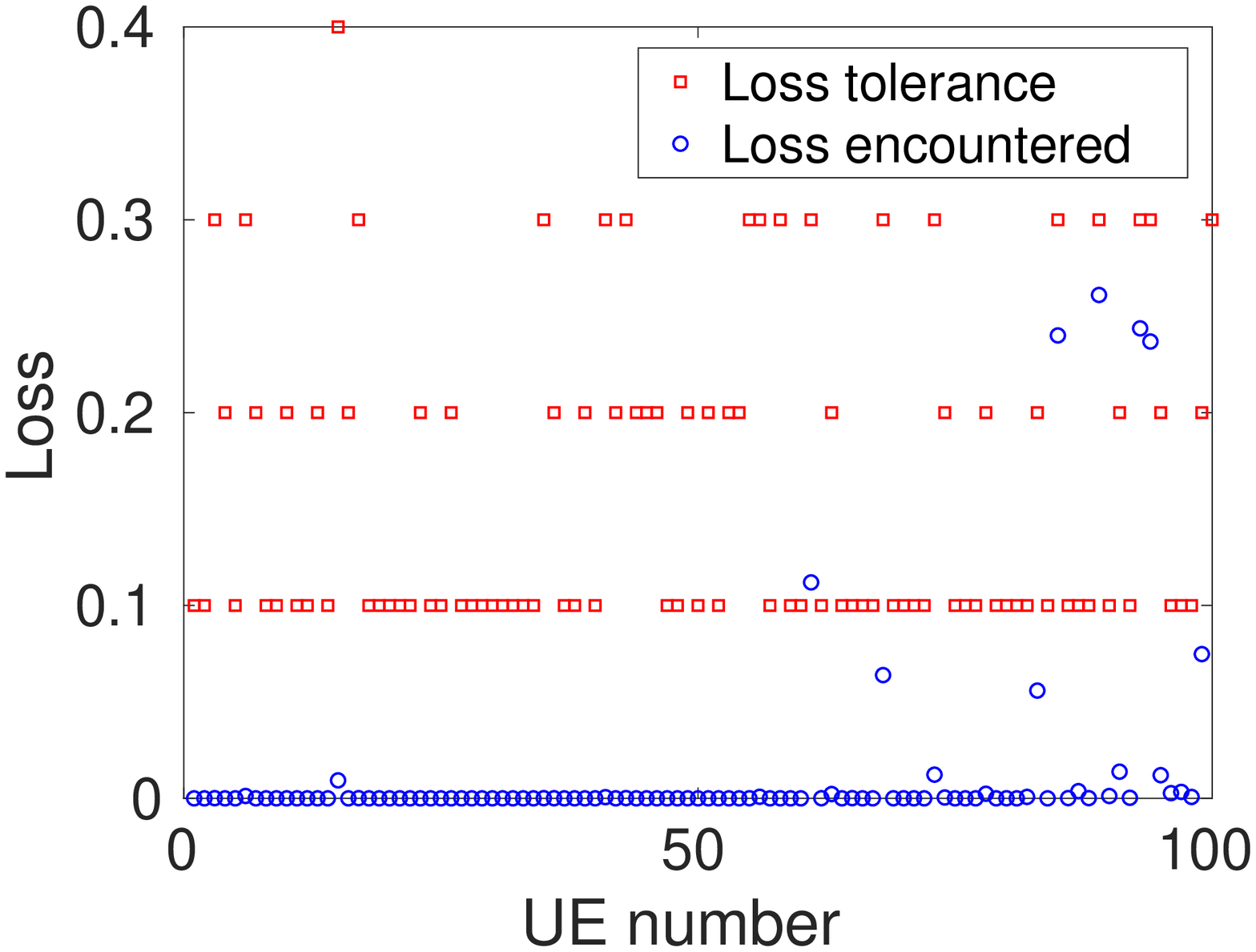}
\caption{}
\label{loss_mwp}
    \end{subfigure}%    
~
    \begin{subfigure}[htb]{0.31\textwidth}
        \centering
\includegraphics[scale = 0.28]{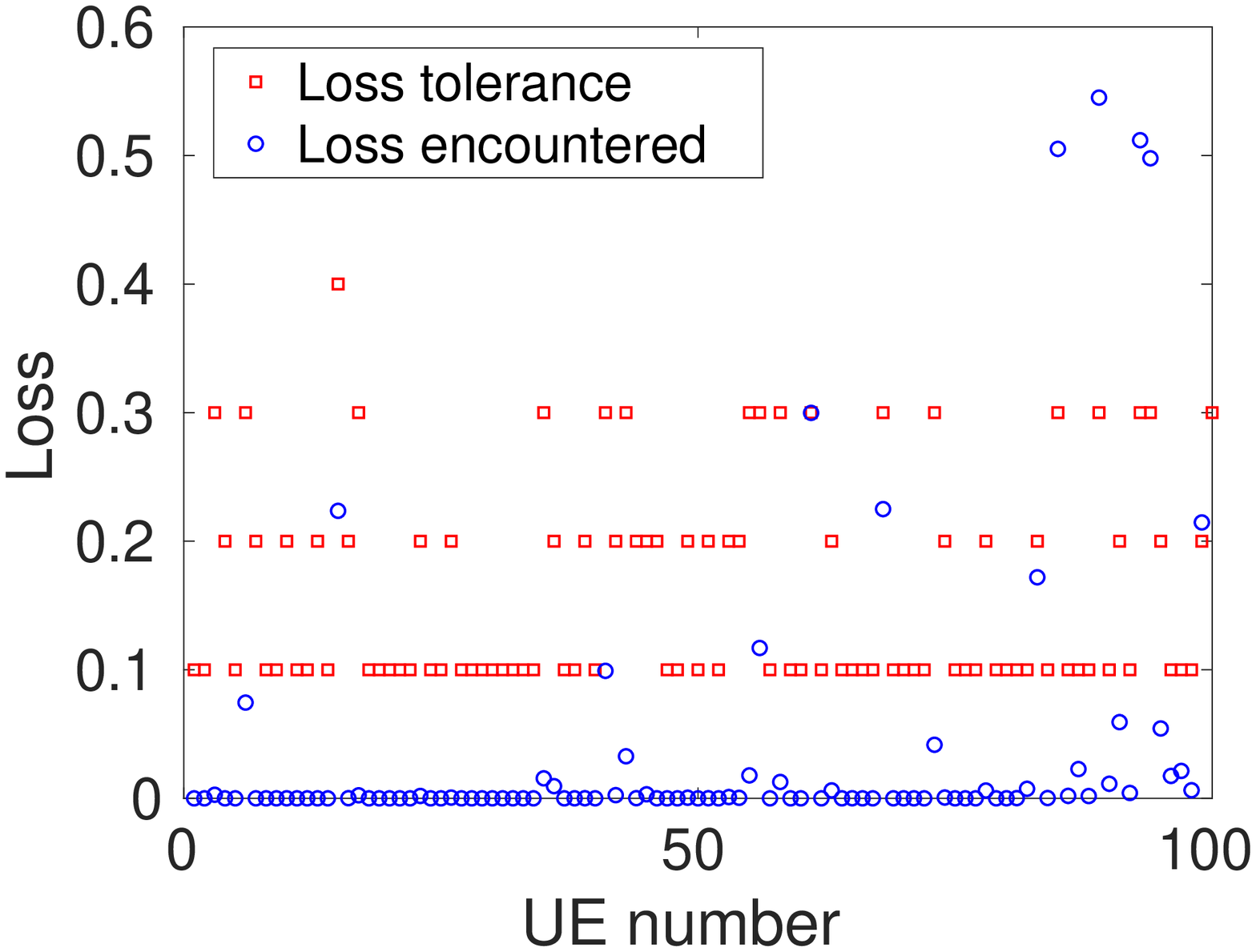}
\caption{}
\label{loss_exp}
    \end{subfigure}%    
~
\caption{Tolerable loss versus loss encountered using a)\ MW b)\ MW-priority c)\ EXP-Q}
\label{fig1}
\vspace{-0.3cm}
\end{figure*}

    \begin{figure}[htb]
    \centering
    \includegraphics[scale = 0.3]{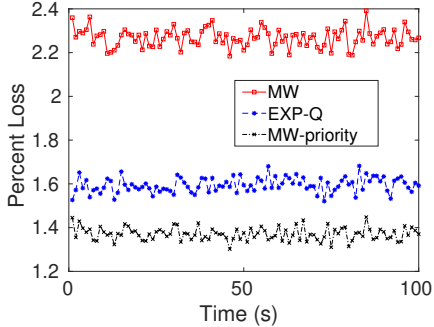} %earlier : [width = 3.5in]
\caption{Average losses in MW, EXP-Q and MW-priority schemes}
\label{loss_avg}
\vspace{-0.4cm}
    \end{figure}%
    
    \begin{figure}[htb]
%         \centering
\includegraphics[scale = 0.4]{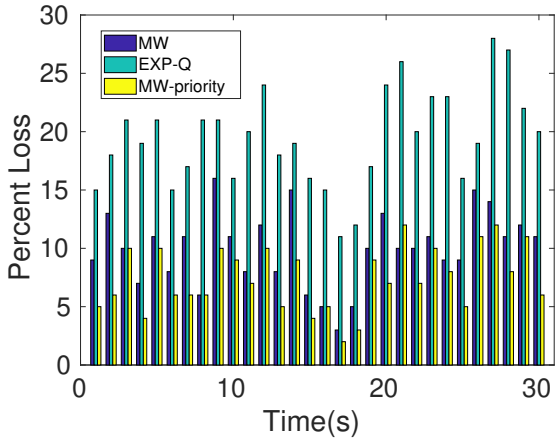}
\caption{Loss pattern of a UE (losses per sub-frame)}
\label{loss_bar}
\vspace{-0.3cm}
    \end{figure}%    

   The first set of plots in Figure~\ref{fig1} compare the losses encountered by UEs to their respective loss tolerances. Figures~\ref{loss_mw} and~\ref{loss_mwp} show this comparison for MW and MW-priority policies respectively. Both these policies succeed in meeting the loss requirements of all the UEs in the system. The queueing system is, thus, stable under both the proposed MW policies. 
   Figure~\ref{loss_exp} plots the losses encountered when using the modified EXP-Q rule for resource allocation. We observe that several UEs experience losses significantly greater than their tolerable limits and the queueing system is rendered unstable. \par
   
   Figure~\ref{loss_avg} compares the plots of the average losses encountered by the UEs under the three schemes. For this, the losses encountered per second have been exponentially averaged for a user. Every point in the plot is then obtained by averaging over all the UEs. We observe that EXP-Q rule results in a better loss performance than the MW scheme. Even though EXP-Q gives us a better average loss performance, as we saw in Figure~\ref{fig1}, it failed to meet the loss requirements of several UEs. On the other hand, despite a greater average system packet loss, MW is able to meet the loss requirements of all the UEs. The MW-priority scheme leads to the least average packet losses among the three schemes. \par
   
   As discussed in Section~\ref{sec:allocation}, in addition to the amount of packet loss seen by an end user, the pattern in which the losses occur also has a major bearing on the video quality experienced by the user. While some amount of packet loss spread more or less uniformly through a session may lead to no degradation in quality at all, a concentrated packet loss can be extremely annoying in a video stream and may even lead to UEs quitting the session altogether. In order to observe the pattern of packet loss encountered under the three policies over time, we plot the percent packet loss encountered by one of the users with a high loss threshold. This is plotted for all the policies as a function of time in Figure~\ref{loss_bar}. EXP-Q sees the most variable loss pattern. The losses per second see jumps as high as $10 \%$ from one second to another. The MW policy does better than EXP-Q. However, the MW-priority policy provides the most uniform loss pattern of the three. It, therefore, controls the burstiness of the losses encountered and ensures that no UE is starved for long periods at a stretch. \par
   
   We have verified the effectiveness of the proposed policies via simulations. We now present our conclusions.
   \vspace{-0.2cm}
\section{Conclusions} \label{sec:conclusion}
\vspace{-0.1cm}
In this paper, we have addressed the problem of resource allocation in loss tolerant eMBMS networks. eMBMS flows are primarily video streams that can tolerate a certain amount of packet loss without affecting the quality perceived by the end users. Each flow is allocated one PRB every sub-frame. We have converted the problem of determining the optimal resource allocation policy for the system under consideration to the problem of stabilizing a queueing system. We have proposed a MW and a MW-priority algorithm for allocating PRBs in loss tolerant multicast video streaming. Since MW schemes are typically computationally expensive to implement, we propose a MWBM that provides a polynomial time implementation of the proposed policies. We have also generalized the EXP-Q rule~\cite{shakkottai} for use with multiple channels and multicast transmission. Simulations have been performed to study and compare the loss performance of the MW, MW-priority and the modified EXP-Q policies. We have observed from the simulations that the MW-priority policy results in the least packet loss of all the policies. Usage of such a policy for video streaming using eMBMS can significantly lower the burden on the network and reduce congestion when multiple video streams are being transmitted simultaneously in a cell.

% \section{Proofs} \label{sec:proofs}
\vspace{-0.2cm}
\appendix
\vspace{-0.1cm}
\subsection{Proof of Theorem~\ref{theorem:relation}} \label{proof:relation}
We begin by constructing the following randomized resource allocation policy $\Gamma_\delta$ based on $LP(\delta)$ defined in Section~\ref{subsec:dtmc}:
\begin{mydef} \label{def:randomized}
Randomized allocation policy $\Gamma_\delta$: $\Gamma_\delta$ chooses an allocation vector in a sub-frame according to a feasible solution ${\bf w}$ of $LP(\delta)$. If the system is in channel state $C$, $\Gamma_\delta$ chooses allocation vector $B_i$ w.p. $w_{B_iC}$ i.e. $P({\bf B^{\Gamma_\delta}[t]} = B_i|C(t) = C) = w_{B_iC} \ \forall \ t$ and decisions across sub-frames are independent. $\delta$ is an input parameter for $\Gamma_\delta$.
\end{mydef}
The definition of $\Gamma_\delta$ will be used for proving various results in this and the following sections.
Consider $\pmb{\lambda} \in \Lambda^\circ$. By the definition of $\Lambda^\circ$, this means that, there exists $\delta > 0$ such that $LP(\delta)$ is feasible for arrival rate vector $\pmb{\lambda}$. Let ${\bf w} = \{w_{B_iC}\}$ denote a feasible solution of $LP(\delta)$.
 Therefore, we can use policy $\Gamma_\delta$ to take scheduling decisions in each sub-frame according to ${\bf w}$. Let $A_k[t]$ denote the arrival process of queue $k$. $A_k[t] = 1$ if there is an arrival to queue $k$ in sub-frame $t$ and $0$ otherwise. $D^{\Gamma_\delta}_k[t]$ denotes the departure process of queue $k$ under $\Gamma_\delta$. $D^{\Gamma_\delta}_k[t]$ is $1$ if a token departs from queue $k$ under $\Gamma_\delta$ in sub-frame $t$ and $0$ otherwise. We have:
 \begin{equation*}
  Q^{\Gamma_\delta}_k[t+1] = \max \{(Q^{\Gamma_\delta}_k[t]+A_k[t]-D^{\Gamma_\delta}_k[t]),0\},
 \end{equation*}
 where $Q^{\Gamma_\delta}_k[t]$ denotes the length of the token queue of UE $k$ at time $t$ under policy $\Gamma_\delta$. For the sake of simplicity of notation, we omit the $\Gamma_\delta$ superscript from $Q^{\Gamma_\delta}_k[t]$ and $D^{\Gamma_\delta}_k[t]$ through the rest of this section.
Since a departure from queue $k$ means that UE $k$ was successfully served, the corresponding service rate $\mu_k^{\Gamma_\delta}[t] = 1$ and we can write the above equation as:
 \begin{equation*}
  Q_k[t+1] = \max \{(Q_k[t]+A_k[t]-\mu_k^{\Gamma_\delta}[t]),0\}.
 \end{equation*}

 The state of the queueing system in a sub-frame can be completely defined by the queue lengths of all the token queues in that sub-frame. We denote the state of the system in sub-frame $t$ by the vector ${\bf Q[t]} = [Q_1[t],\ldots,Q_M[t]]$. 
  Since the scheduling decisions taken under $\Gamma_\delta$ only make use of the current state of the system, the evolution of states of the system $\{\bf{Q[t]}\}_{t\geq0}$ under $\Gamma_\delta$ forms a Discrete Time Markov Chain (DTMC). This DTMC is countable, irreducible and aperiodic. We prove this in the following result.
  
\begin{lemma} \label{lemma:dtmc}
 The DTMC $\{\bf{Q[t]}\}_{t\geq0}$ is countable, irreducible and aperiodic.
\end{lemma}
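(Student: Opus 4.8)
The plan is to verify the three properties in turn, the first being immediate and the last two both reducing to the exhibition of positive-probability sample paths between states. Since each token queue length $Q_k[t]$ takes values in $\mathbb{Z}_{\geq 0}$, the state vector $\mathbf{Q}[t]$ lives in a subset of $\mathbb{Z}_{\geq 0}^M$, which is countable; this settles countability. For aperiodicity I would show that the empty state $\mathbf{0}$ carries a self-loop: starting from $\mathbf{Q}[t]=\mathbf{0}$, if no token arrives anywhere in sub-frame $t$ — an event of probability $\prod_{k}\tilde{\ell}_k>0$ since each $\tilde{\ell}_k\in(0,1)$ — then by the recursion $Q_k[t+1]=\max\{Q_k[t]+A_k[t]-\mu_k^{\Gamma_\delta}[t],0\}$ every queue stays empty. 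Hence $P(\mathbf{Q}[t+1]=\mathbf{0}\mid\mathbf{Q}[t]=\mathbf{0})>0$, so the period of $\mathbf{0}$ is one, and once irreducibility is in hand this forces the whole chain to be aperiodic.

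The substance of the argument is irreducibility, for which it suffices to show that $\mathbf{0}$ communicates with every state, i.e. that from an arbitrary $\mathbf{Q}$ the chain can reach $\mathbf{0}$ in finitely many steps with positive probability, and conversely. To reach an arbitrary target $(Q_1,\ldots,Q_M)$ from $\mathbf{0}$, I would condition on the system staying for $\max_k Q_k$ consecutive sub-frames in a worst-case channel state $C^-$ in which no UE can decode, so that $\mu_k^{\Gamma_\delta}=0$ for every $k$ regardless of the allocation drawn by $\Gamma_\delta$; such a state occurs with probability $g(C^-)>0$. During these sub-frames the queues never decrease, and since each arrival indicator is an independent Bernoulli taking both values with positive probability, any prescribed pattern of increments is realized with positive probability, so the target is reached along a positive-probability path. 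To reach $\mathbf{0}$ from an arbitrary $\mathbf{Q}$, I would instead suppress arrivals — the no-arrival event across all queues has probability $\prod_{k}\tilde{\ell}_k>0$ in each sub-frame — so that the queues are non-increasing, and then drain them one token at a time.

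The key input for the draining step is the feasibility of $LP(\delta)$ for $\pmb{\lambda}\in\Lambda^\circ$: this guarantees that for every UE $k$ there is a channel state $C$ and an allocation $B_i$ with $g(C)\,w_{B_iC}>0$ and $(\pmb{\mu_{B_iC}})_k=1$, so that under $\Gamma_\delta$ UE $k$ is served with strictly positive probability in a single sub-frame. Chaining finitely many such sub-frames, each with no arrivals, removes all tokens and brings the chain to $\mathbf{0}$; the overall path, being a finite product of positive probabilities, has positive probability. I expect this draining direction to be the main obstacle, since it is the place where one must genuinely invoke the LP feasibility to certify that every UE can be served, rather than relying on the combinatorial structure of the allocations alone. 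Care is also needed to record the standing assumptions that every channel state occurs with positive probability and that $0<\tilde{\ell}_k<1$ for all $k$, as these are exactly what make both the build-up and the drain constructions go through.
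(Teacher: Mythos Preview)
Your proof is correct and follows the same skeleton as the paper's: countability via $\mathbb{Z}_{\geq 0}^M$, irreducibility by routing through the empty state (drain, then rebuild), and aperiodicity via a self-loop. The paper is terser on the two substantive steps---it simply asserts that all queues can be emptied in $\max_k Q_k$ sub-frames by ``scheduling all UEs for service'' and rebuilt in $\max_k Q'_k$ sub-frames via ``arrival and no departure''---whereas you justify the drain by invoking $LP(\delta)$-feasibility to certify that each UE is individually served with positive probability, and you name the channel-state hypothesis ($C^-$, a state in which no UE decodes) that the paper's ``no departure'' step tacitly relies on. Your draining argument is thus a little more robust (it does not require all UEs to be servable simultaneously), and your aperiodicity argument at $\mathbf{0}$ is cleaner than the paper's generic self-loop claim, since $\max\{0+0-\mu_k,0\}=0$ holds irrespective of service; but the overall route is the same.
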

\begin{proof}
 $\bullet$ \textit{Countable}: The state space of the DTMC is the set of all $M$-tuples $(Q_1[t],\ldots,Q_M[t])$ where $Q_k[t] \in \mathbb{N}$. It forms an $M$ dimensional Cartesian product of the set of natural numbers $\mathbb{N}$ which is a countable set.  Therefore, the state space of the DTMC and hence the DTMC itself is countable (by Theorem $2.13$ in \cite{rudin}). \\
$\bullet$ \textit{Irreducible}: The DTMC can transition from any state $\bf{Q}$ to a state $\bf{Q'}$ in the following steps: \\
$-$ Step 1: Schedule all UEs for service until all queues are empty. This is accomplished in $\max_k Q_k$ sub-frames. \\
$-$ Step 2: For the next $\max_k Q'_k$ sub-frames, the token queue of UE $k$ has an arrival and no departure for the first $Q'_k$ sub-frames. In the remaining $(\max_k Q'_k - Q'_k)$ sub-frames, there is no new arrival and no departure. At the end of this step, the DTMC is in state ${\bf Q'}$.\\
These steps define at least one path of length $(\max_k Q_k + \max_k Q'_k)$ from any state ${\bf Q}$ to any other state ${\bf Q'}$. Therefore, the DTMC is irreducible. \\
$\bullet$ \textit{Aperiodic}: If the DTMC is in state $\bf{Q[t]}$ and no new token arrives in any queue and no queue is scheduled for service, the state of the DTMC remains unchanged. Therefore, self loops exist and the DTMC is aperiodic.
\end{proof}
We now begin the proof of Theorem~\ref{theorem:relation}.

%POOF OF RELATION BETWEEN LAMBDA SETS AND STABILITY REGION:

\begin{proof}
We prove Theorem~\ref{theorem:relation} in two steps. We first establish that $\Lambda^\circ \subseteq {\cal S}$ in Lemma~\ref{lemma:relation1} and then show that ${\cal S} \subseteq \overline{\Lambda}$ in Lemma~\ref{lemma:relation2}. 

\begin{lemma} \label{lemma:relation1}
 Every $\pmb{\lambda} \in \Lambda^\circ$ is a stabilizable arrival rate vector. Hence, $\Lambda^\circ \subseteq {\cal S}$.
\end{lemma}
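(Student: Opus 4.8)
The plan is to exhibit, for any $\pmb{\lambda} \in \Lambda^\circ$, a single feasible policy that stabilizes the queueing system, thereby placing $\pmb{\lambda}$ in ${\cal S}$. Since $\pmb{\lambda} \in \Lambda^\circ = \bigcup_{\delta > 0}\Lambda(\delta)$, there is some $\delta > 0$ and a feasible solution ${\bf w} = \{w_{B_iC}\}$ of $LP(\delta)$. I would take the randomized policy $\Gamma_\delta$ of Definition~\ref{def:randomized} built from this ${\bf w}$ and show it makes the queue-length DTMC positive recurrent with finite mean. The structural fact I would exploit is that $\Gamma_\delta$ bases its allocation only on the current channel state $C(t)$, which is i.i.d.\ and independent of the queue lengths; hence the conditional expected service rate decouples from the queue state, and the equality constraint of $LP(\delta)$ gives
\[
\mathbb{E}\!\left[\mu_k^{\Gamma_\delta}[t] \mid {\bf Q}[t] = {\bf Q}\right] = \sum_{C \in {\cal C}} g(C) \sum_{B_i \in {\cal B}} w_{B_iC}\,(\pmb{\mu_{B_iC}})_k = \lambda_k + \delta,
\]
independently of ${\bf Q}$.

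Next I would run a Foster--Lyapunov drift argument with the quadratic Lyapunov function $V({\bf Q}) = \sum_{k=1}^M Q_k^2$. Starting from the recursion $Q_k[t+1] = \max\{Q_k[t] + A_k[t] - \mu_k^{\Gamma_\delta}[t], 0\}$ and the elementary bound $\max\{x,0\}^2 \le x^2$, I expand the square and use $A_k[t], \mu_k^{\Gamma_\delta}[t] \in \{0,1\}$ so that $(A_k[t] - \mu_k^{\Gamma_\delta}[t])^2 \le 1$. Taking conditional expectation and invoking the decoupling identity above together with $\mathbb{E}[A_k[t]] = \lambda_k$ yields
\[
\mathbb{E}\!\left[V({\bf Q}[t+1]) - V({\bf Q}[t]) \mid {\bf Q}[t] = {\bf Q}\right] \le M - 2\delta \sum_{k=1}^M Q_k .
\]
This drift is at most $-1$ whenever $\sum_k Q_k > (M+1)/(2\delta)$, i.e.\ it is negative outside the finite set of states $F = \{{\bf Q} : \sum_k Q_k \le (M+1)/(2\delta)\}$, while on $F$ it is bounded above by $M$.

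With this negative drift in hand I would appeal to the Foster--Lyapunov criterion. Lemma~\ref{lemma:dtmc} already guarantees that the chain $\{{\bf Q}[t]\}$ under $\Gamma_\delta$ is countable, irreducible and aperiodic, so the bounded negative drift outside the finite set $F$ implies positive recurrence. To obtain the mean-boundedness required by the stability definition, I would sum the drift inequality over $t = 0, \ldots, T-1$, telescope the Lyapunov terms, drop the nonnegative $\mathbb{E}[V({\bf Q}[T])]$, and divide by $T$ to get $\limsup_T \tfrac{1}{T}\sum_{t=0}^{T-1}\mathbb{E}\!\left[\sum_k Q_k[t]\right] \le M/(2\delta)$; combined with positive recurrence this transfers to $\sup_t \mathbb{E}[Q_k[t]] < \infty$ for each $k$. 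Hence $\Gamma_\delta$ stabilizes the system, so $\pmb{\lambda} \in {\cal S}$, and since $\pmb{\lambda} \in \Lambda^\circ$ was arbitrary, $\Lambda^\circ \subseteq {\cal S}$.

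I expect the main obstacle to be the decoupling step in the drift computation: one must justify carefully that $\Gamma_\delta$'s allocation depends on ${\bf Q}[t]$ only through the channel state, which is itself independent of the queues, so that the $LP(\delta)$ equality can legitimately be substituted inside the conditional expectation to produce the clean $+\delta$ service surplus. Once that surplus is established, the remaining Lyapunov bookkeeping and the invocation of the standard positive-recurrence criterion are routine.
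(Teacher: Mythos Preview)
Your proposal is correct and follows essentially the same route as the paper: both use the randomized policy $\Gamma_\delta$ built from a feasible solution of $LP(\delta)$, the quadratic Lyapunov function $\sum_k Q_k^2$, derive the identical drift bound $M - 2\delta\sum_k Q_k$, and invoke Foster's theorem on the finite set $\{\sum_k Q_k \le (M+1)/(2\delta)\}$. Your additional telescoping step to extract $\sup_t \mathbb{E}[Q_k[t]] < \infty$ from the drift is a reasonable elaboration the paper leaves implicit, but the argument is otherwise the same.
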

\begin{proof}
 To prove this result, we first show using Foster's theorem~\cite{foster} that the DTMC $\{\bf{Q[t]}\}_{t\geq0}$ is 
 positive recurrent and hence the queue lengths do not grow infinitely under $\Gamma_\delta$. \par
 
Using the Lyapunov function $f({\bf Q[t]}) = \sum_{k=1}^{M} Q_k^2[t]$, we have:
\begin{align*}
\lefteqn{ f({\bf Q[t+1]})-f({\bf Q[t]})} \\
&\leq \sum_{k=1}^M [(A_k(t)-\mu_k^{\Gamma_\delta}[t])^2 + 2Q_k[t](A_k[t] - \mu_k^{\Gamma_\delta}[t])].
\end{align*}
Hence,
\begin{eqnarray}
 \lefteqn{ \mathbb{E}[(f({\bf Q[t+1]})-f({\bf Q[t]}))|{\bf Q[t]}] } \nonumber \\
 &\hspace{-0.6cm}\leq \mathbb{E}\left[(\sum_{k=1}^M [(A_k(t)-\mu_k^{\Gamma_\delta}[t])^2 + 2Q_k[t](A_k[t] - \mu_k^{\Gamma_\delta}[t])])|{\bf Q[t]}\right], \nonumber \\
 &\hspace{-1cm}\leq M + 2\mathbb{E}\left[(\sum_{k=1}^M Q_k[t]A_k[t] - \sum_{k=1}^M Q_k[t]\mu_k^{\Gamma_\delta}[t])|{\bf Q[t]} \right], \nonumber \\
 \label{eqn:1} &\hspace{-1.5cm}\leq M + 2\sum_{k=1}^M Q_k[t]\lambda_k - 2\sum_{k=1}^M Q_k[t]\mathbb{E}\left[\mu_k^{\Gamma_\delta}[t]|{\bf Q[t]} \right]. \nonumber
\end{eqnarray}

From $LP(\delta)$, we have $\mathbb{E}\left[\mu_k^{\Gamma_\delta}[t]|{\bf Q[t]} \right] = \lambda_k+\delta$. Therefore,
\begin{eqnarray*}
 \lefteqn{ \mathbb{E}[(f({\bf Q[t+1]})-f({\bf Q[t]}))|{\bf Q[t]}] } \\
 &\leq M + 2\sum_{k=1}^M Q_k[t]\lambda_k - 2\sum_{k=1}^M Q_k[t](\lambda_k+\delta), \\
 &\leq M - 2\sum_{k=1}^M Q_k[t]\delta.
\end{eqnarray*}
Defining set ${\cal A} = \{{\bf Q} : \sum_{k=1}^M Q_k \leq \frac{M+1}{2\delta}\}$, we have:

\begin{align*}
  \mathbb{E}[(f({\bf Q[t+1]})-f({\bf Q[t]}))|{\bf Q[t]}] < \begin{cases}
          {-1}, & \forall \ {\bf Q[t]} \notin \cal{A},\\
	{\infty}, &  \text{otherwise}.
         \end{cases}
\end{align*}
Thus, by Foster's theorem~\cite{foster}, the DTMC is positive recurrent so the expected queue lengths in the queueing system are finite. Therefore, $\Gamma_\delta$ stabilizes the system for arrival rate vector $\pmb{\lambda} \in \Lambda^\circ$. Thus,  $\pmb{\lambda} \in {\cal S}$ which implies that $\Lambda^\circ \subseteq {\cal S}$. 
\end{proof}
This proves the first part of our result. We now need to show that ${\cal S} \subseteq \overline{\Lambda}$.
% NON-TRIVIAL SAMPLE PATH:\\
In the interest of simplicity of notation, we assume that under a policy $\Gamma$ that stabilizes the system, the following limit exists w.p. $1$.
\begin{equation} \label{eq:limit}
 \lim_{T \rightarrow \infty} \frac{1}{T} \sum_{t=1}^T \mathds{1}_{B_iC}^\Gamma[t],
\end{equation}
where $\mathds{1}_{B_iC}^\Gamma[t]$ is an indicator random variable that is $1$ if allocation vector $B_i$ is chosen by $\Gamma$ under channel state $C$ in sub-frame $t$ and zero otherwise. Now consider the following sets of sample paths:\\
$\bullet$ $A_1$ : the set of sample paths on which Strong Law of Large Numbers (SLLN) holds for the arrival rates i.e. $\frac{\sum_{i=1}^t \lambda_k[t]}{t} \to \lambda_k$ as $ t\rightarrow \infty, \ \forall \ k$. This is a probability $1$ set i.e. $P(A_1) = 1$. \\
$\bullet$ $A_2$ : set of sample paths on which $\frac{\sum_{i=1}^t \mathds{1}_{\{C(t) = C\}}}{t} \to g(C)$ as $ t\rightarrow \infty, \ \forall \ C$ (SLLN holds) where $\mathds{1}_{\{C(t) = C\}}$ is an indicator random variable that is $1$ if the channel state in sub-frame $t$ is $C$ and $0$ otherwise. Since $g$ is a probability distribution over the set of channel states $\cal{C}$, we have, $P(A_2) = 1$. \\
$\bullet$ $A_3$ : the set of sample paths on which service rate under $\Gamma$ is $\geq \pmb{\lambda}$. Since $\Gamma$ stabilizes the system, we have $P(A_3) = 1$. \\
$\bullet$ $A_4$ : the set of sample paths over which the limit in (\ref{eq:limit}) exists. Since we assume that this limit exists w.p. $1$, $P(A_4) = 1$. \\
Since $A_1, A_2, A_3, A_4$ are probability $1$ sets, their intersection, 
\begin{equation} \label{eq:non_trivial}
 A = \bigcap_{i=1}^4A_i
\end{equation}
 is also a probability $1$ set. We refer to the sample paths belonging to this set $A$ as \textit{non-trivial sample paths}. \par
We now prove the second part of our result.

\begin{lemma} \label{lemma:relation2}
If $\pmb{\lambda} \notin \overline{\Lambda}$, then $\pmb{\lambda} \notin {\cal S}$. Thus, ${\cal S} \subseteq \overline{\Lambda}$.
\end{lemma}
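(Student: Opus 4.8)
The plan is to prove the contrapositive, i.e. to show that every stabilizable arrival rate vector lies in $\overline{\Lambda}$, which is exactly the claim ${\cal S} \subseteq \overline{\Lambda}$. So I fix an arbitrary feasible policy $\Gamma$ that stabilizes the queueing system for $\pmb{\lambda}$ and restrict attention to the probability-$1$ set $A$ of non-trivial sample paths defined in~(\ref{eq:non_trivial}); all the limits invoked below exist along $A$ by construction, since $A \subseteq A_1 \cap A_2 \cap A_3 \cap A_4$.

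First I would build a candidate feasible solution of $LP(\delta)$ directly from the empirical behaviour of $\Gamma$. For every channel state $C$ with $g(C) > 0$, define $w_{B_iC} = \big(\lim_{T\to\infty}\frac{1}{T}\sum_{t=1}^T \mathds{1}_{B_iC}^\Gamma[t]\big)/g(C)$, using that the limit in~(\ref{eq:limit}) exists on $A_4$; for the frequency-zero states with $g(C)=0$ assign probability one to any fixed allocation. Nonnegativity of $w_{B_iC}$ is immediate, so the second constraint of $LP(\delta)$ holds. The normalization $\sum_{B_i \in {\cal B}} w_{B_iC} = 1$ follows because summing the indicators over all allocations makes $\sum_{B_i}\mathds{1}_{B_iC}^\Gamma[t]$ count exactly the sub-frames spent in state $C$, whose long-run frequency is $g(C)$ by $A_2$; dividing by $g(C)$ gives $1$. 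Hence $\mathbf{w} = \{w_{B_iC}\}$ satisfies the last two constraints of $LP(\delta)$.

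Next I would identify the long-run service rate induced by $\mathbf{w}$. Partitioning the sub-frames according to the realized pair $(B_i,C)$ and passing to the limit along $A$ yields $\lim_{T\to\infty}\frac{1}{T}\sum_{t=1}^T \pmb{\mu^\Gamma[t]} = \sum_{C \in {\cal C}}\sum_{B_i \in {\cal B}} g(C)\, w_{B_iC}\, \pmb{\mu_{B_iC}}$, because whenever the system is in state $C$ and $\Gamma$ selects $B_i$ the service vector realized is precisely $\pmb{\mu_{B_iC}}$. Since $\Gamma$ stabilizes the system, on $A_3$ this time-average service rate dominates the arrival rate componentwise, i.e. $\sum_{C}\sum_{B_i} g(C)\, w_{B_iC}\, \pmb{\mu_{B_iC}} \ge \pmb{\lambda}$, which is just the statement that a stable queue cannot have its long-run departure rate fall below its arrival rate. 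Absorbing the nonnegative slack between this service rate and $\pmb{\lambda}$ into the scalar $\delta \ge 0$ then exhibits $\mathbf{w}$ as feasible for $LP(\delta)$, giving $\pmb{\lambda} \in \Lambda(\delta) \subseteq \overline{\Lambda}$ and hence the contrapositive of Lemma~\ref{lemma:relation2}.

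I expect the delicate step to be the passage from the componentwise inequality ``service rate $\ge \pmb{\lambda}$'' to the equality constraint of $LP(\delta)$, which demands $\pmb{\lambda} + \delta\pmb{1}$ to itself be an attainable average service rate for a single scalar $\delta$. Making this rigorous requires exploiting the structure of the attainable rates---in particular the presence of idling and under-serving allocations in ${\cal B}$ (so that $\pmb{0}$ is attainable and surplus service can be shed)---so that the constructed distribution can be perturbed to meet the equality exactly while keeping $\delta \ge 0$. The secondary technical point, deriving ``service $\ge$ arrival'' from stability, I would handle by telescoping the recursion $Q_k[t+1] \ge Q_k[t] + A_k[t] - \mu_k^\Gamma[t]$ over a horizon $T$, dividing by $T$, and using $\sup_t \mathbb{E}[Q_k[t]] < \infty$ together with the SLLN on $A_1$ to drive the residual queue-length term to zero.
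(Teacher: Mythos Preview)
Your argument is essentially the paper's own: prove the contrapositive by fixing a stabilizing policy $\Gamma$, restricting to the non-trivial sample paths $A$, reading off the empirical allocation frequencies $w_{B_iC}$ (the paper calls them $v_{B_iC}$), and exhibiting these as a feasible point of $LP(\delta)$ so that $\pmb{\lambda}\in\overline{\Lambda}$. The one substantive difference is that the paper skips your ``delicate step'' entirely: it asserts directly that for a stable queue the long-run departure rate equals the arrival rate and writes $\sum_C\sum_{B_i} g(C)\,v_{B_iC}\,\pmb{\mu_{B_iC}} = \pmb{\lambda}$, landing immediately in $\Lambda(0)$, whereas you first obtain the componentwise inequality and then worry about shedding the vector-valued slack down to a scalar $\delta$. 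Your caution is well placed (the paper is tacitly identifying service rate with departure rate), but since the paper is content with the equality-from-stability shortcut, your proposal is at least as rigorous as, and strictly parallel to, the published proof.
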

\begin{proof}
 We prove this result using a contradiction. Let $\pmb{\lambda} \notin \overline{\Lambda}$ be a stabilizable arrival rate vector i.e. $\pmb{\lambda} \in {\cal S}$.
Since $\pmb{ \lambda}$ is a stabilizable arrival rate vector, there exists some allocation policy $\Gamma$ that can stabilize the system for arrival rate $\pmb{\lambda}$. \par
% Consider a sample path along which the arrival rates of the queues converge to $\pmb{\lambda}$ w.p. $1$ and the fraction of time for which the system experiences channel state $C$ converges to $g(C)$ w.p. $1$. 
We observe the scheduling decisions taken by this policy $\Gamma$ along a non-trivial sample path from the set $A$ defined in (\ref{eq:non_trivial}). Let $v_{B_iC}$ denote the fraction of time for which $\Gamma$ chooses the allocation vector $B_i$ under channel state $C$ along such a sample path.  Since $\Gamma$ stabilizes the system, the rate of departures must equal the arrival rate in the system. Therefore, we have:
\begin{align*}
\sum_{C \in {\cal C}} \sum_{B_i \in {\cal B}} g(C) v_{B_iC} \pmb{\mu_{B_iC}} = \pmb{\lambda},\\
\text{where} \ v_{B_iC} \geq 0 \ \forall \ B_i \in {\cal B}, \ C \in {\cal C}, \\
\sum_{B_i \in {\cal B}} v_{B_iC} = 1 \ \forall \ C \in{\cal C}.
\end{align*}
This implies that $\pmb{v} = \{v_{B_iC}\}$ is a feasible solution of $LP(\delta)$ and that,
\begin{equation}
 \pmb{\lambda} \in \Lambda(0) \implies \pmb{\lambda} \in \overline{\Lambda},
\end{equation}

which is a contradiction. Therefore, $\pmb{\lambda} \notin \overline{\Lambda}$ is not stabilizable i.e. any stabilizable $\pmb{\lambda}$ must be contained in $\overline{\Lambda}$. Hence, \\ ${\cal S} \subseteq \overline{\Lambda}$.
\end{proof}

From Lemmas~\ref{lemma:relation1} and~\ref{lemma:relation2}, we have, $\Lambda^\circ \subseteq {\cal S} \subseteq \overline{\Lambda}$, which is the required result. This concludes the proof.
\end{proof}

% PROOF OF EQUIVALENCE OF FEASIBLE REGION AND STABILITY REGION :
\subsection{Proof of Lemma~\ref{lemma:stable}} \label{proof:stable}
\begin{proof}
We need to show that the loss requirement of a UE is met if and only if its token queue in the queueing system is stable. We first argue that the stability of the queueing system implies that the loss requirements of the UEs are met.
If the queue corresponding to UE $k$ is stable, it means that there exists a policy $\Gamma$ that stabilizes the queue for $\pmb{\lambda} \in \Lambda^\circ$. We can, therefore, construct a randomized policy $\Gamma_\delta$ as defined in Definition~\ref{def:randomized} (Appendix~\ref{proof:relation}). Under $\Gamma_\delta$, the rate of service is greater than $\lambda_k$ which means that UE $k$ is served in greater than $(1-\tilde{\ell}_k)$ of the sub-frames. Therefore, the loss encountered by UE $k$ is less than $\tilde{\ell}_k$ and its loss requirements are successfully met. \par

Now, let us assume that the loss requirement of UE $k$ is met. We show that this ensures the stability of its token queue. Since the loss requirement $\tilde{\ell}_k$ is achievable, there exists a policy $\Gamma$ that satisfies the loss requirement. This means that, under $\Gamma$, the UE is being served in greater than $(1-\tilde{\ell}_k)$ fraction of sub-frames. Since the arrival rate $\lambda_k = (1-\tilde{\ell}_k)$, the queue is served at a rate greater than the arrival rate. Hence, $\Gamma$ stabilizes the token queue. From these arguments, we conclude that the loss requirement of a UE is met iff its corresponding token queue in the queueing system is stable.
Therefore, the feasible region of the optimal allocation policy $\Gamma^\star$, ${\cal L}^{\Gamma^\star}$ is equivalent to the stability region of the queueing system, $\cal{S}$.
\vspace{-0.3cm}
\end{proof}

% PROOF OF THROUGHPUT OPTIMALITY OG GAMMA_0 :
\subsection{Proof of Theorem~\ref{theorem:BP}} \label{proof:BP}
\begin{proof}
Let $D^{\Gamma_0}_k[t]$ denote the departure process of queue $k$ under $\Gamma_0$. We have:
 \begin{equation*}
  Q^{\Gamma_0}_k[t+1] = \max\{(Q^{\Gamma_0}_k[t]+A_k[t]-D^{\Gamma_0}_k[t]),0\},
 \end{equation*}
 where $Q^{\Gamma_0}_k[t]$ denotes the queue length of the token queue of $k$ at time $t$ under $\Gamma_0$. For the sake of simplicity of notation, we omit the $\Gamma_0$ superscript from $Q^{\Gamma_0}_k[t]$ and $D^{\Gamma_0}_k[t]$ through the rest of this section.
Since a departure from queue $k$ means that UE $k$ was successfully served, the service rate $\mu^{\Gamma_0}_k[t] = 1$ and we can write the above equation as:
 \begin{equation*}
  Q_k[t+1] = \max\{(Q_k[t]+A_k[t]-\mu^{\Gamma_0}_k[t]),0\}.
 \end{equation*}
 The state of the queueing system is completely defined by the vector ${\bf Q[t]} = [Q_1[t], \ldots ,Q_M[t]]$.
 The evolution of $\bf{Q[t]}$ forms a DTMC since the scheduling decisions taken by $\Gamma_0$ in a sub-frame are based solely on the state of the system in that sub-frame. The DTMC is countable, irreducible and aperiodic. The proof that the DTMC has these properties follows the same arguments as in Lemma~\ref{lemma:dtmc} in Appendix~\ref{proof:relation}. We now show using Foster's theorem~\cite{foster} that this DTMC is positive recurrent and hence the token queues do not grow infinitely. \par
 
Using the Lyapunov function $f({\bf Q[t]}) = \sum_{k=1}^{M} Q_k^2[t]$, we have:
\begin{align*}
\lefteqn{ f({\bf Q[t+1]})-f({\bf Q[t]})} \\
&= \sum_{k=1}^M [(A_k(t)-\mu^{\Gamma_0}_k[t])^2 + 2Q_k[t](A_k[t] - \mu^{\Gamma_0}_k[t])].
\end{align*}
Hence,
\begin{align}
 \lefteqn{ \mathbb{E}[(f({\bf Q[t+1]})-f({\bf Q[t]}))|{\bf Q[t]}] } \nonumber \\
 &= \mathbb{E}\left[(\sum_{k=1}^M [(A_k(t)-\mu^{\Gamma_0}_k[t])^2 + 2Q_k[t](A_k[t] - \mu^{\Gamma_0}_k[t])])|{\bf Q[t]}\right], \nonumber \\
%  &\hspace{-1cm}\leq M + 2\mathbb{E}\left[(\sum_{k=1}^M Q_k[t]A_k[t] - \sum_{k=1}^M Q_k[t]\mu^{\Gamma_0}_k[t])|{\bf Q[t]} \right], \nonumber \\
&\leq M + 2\sum_{k=1}^M Q_k[t]\lambda_k - 2\mathbb{E}\left[(\sum_{k=1}^M Q_k[t]\mu^{\Gamma_0}_k[t])|{\bf Q[t]} \right]. \label{eqn:1}
\end{align}

Let $\mu^{\Gamma_\delta}_k[t]$ denote the service rate for UE $k$ in sub-frame $t$ under the randomized policy $\Gamma_\delta$. Then, from~(\ref{eqn:mw_objective}), we have:
\begin{equation}
\label{eqn:max} \sum_{k=1}^M Q_k[t]\mu^{\Gamma_0}_k[t] \geq \sum_{k=1}^M Q_k[t]\mu^{\Gamma_\delta}_k[t].
\end{equation}
Therefore, from (\ref{eqn:1}) and (\ref{eqn:max}):
\begin{align*}
 \lefteqn{ \mathbb{E}[(f({\bf Q[t+1]})-f({\bf Q[t]}))|{\bf Q[t]}] } \\
 &\leq M + 2\sum_{k=1}^M Q_k[t]\lambda_k - 2\mathbb{E}\left[(\sum_{k=1}^M Q_k[t]\mu^{\Gamma_\delta}_k[t])|{\bf Q[t]} \right], \\
 &\leq M + 2\sum_{k=1}^M Q_k[t]\lambda_k - 2\sum_{k=1}^M Q_k[t](\lambda_k+\delta), \\
 &\leq M - 2\sum_{k=1}^M Q_k[t]\delta.
\end{align*}
Now consider the set ${\cal A} = \{{\bf Q} : \sum_{k=1}^M Q_k \leq \frac{M+1}{2\delta}\}$, we have:

\begin{align*}
  \mathbb{E}[(f({\bf Q[t+1]})-f({\bf Q[t]}))|{\bf Q[t]}] < \begin{cases}
          {-1}, & \forall \ {\bf Q[t]} \notin \cal{A},\\
	{\infty}, &  \text{otherwise}.
         \end{cases}
\end{align*}
Thus, by Foster's theorem~\cite{foster}, the DTMC is positive recurrent which means that the expected queue lengths in the queueing system will be finite. Therefore, $\Gamma_0$ stabilizes the system and hence meets the loss requirements of the UEs.
\vspace{-0.2cm}
\end{proof}

% PROOF OF DTMC PROPERTIES
\subsection{Proof of Lemma~\ref{lemma:dtmc2}} \label{proof:dtmc2}
For the sake of simplicity of notation, we omit the $\Gamma_P$ superscript from the notations through the rest of this section.
\begin{proof}
$\bullet$ \textit{Countable}: The state of the DTMC ${\bf Q[t]}$ is comprised of the queue lengths of the $M$ UEs and their priority weights. We have already shown in Lemma~\ref{lemma:dtmc} that the state space of queue lengths $(Q_1[t], \ldots ,Q_M[t])$ is a countable set. The state space of the priority weights of the UEs is an $M$ dimensional Cartesian product over the finite set $\{1,2, \ldots ,\kappa\}$ and is therefore a finite countable set (by Theorem $2.13$ in \cite{rudin}). Therefore, the states of the DTMC $\bf{Q[t]}$ form a $2M$ dimensional Cartesian product over two countable sets, the state space of queue lengths and the state space of the priority weights. Therefore, the state space of the DTMC and hence the DTMC itself is countable (by Theorem $2.13$ in \cite{rudin}). \\
$\bullet$ \textit{Irreducible}: Consider that the DTMC is in state ${\bf Q} = \{Q_1, \ldots ,Q_M,\bar{c}_k\}$. We will show that a path exists from $\bf{Q}$ to any state ${\bf Q'} = \{Q'_1, \ldots ,Q'_M,\bar{c'}_k\}$. The DTMC can transition from $\bf{Q}$ to $\bf{Q'}$ in the following steps:\\
$-$ Step 1: Schedule all UEs for service until all queues are empty. This is accomplished in $\max_k Q_k$ sub-frames.\\
 $-$ Step 2: A new token arrives in every queue and no queue is scheduled for service for the next $\min_k Q'_k$ sub-frames. At the end of this step, all queue lengths are equal to $\min_k Q'_k$ and all priority weights are equal to $\min(\min_k Q'_k,\kappa)$. \\
$-$ Step 3: For the next $\max_k Q'_k - \min_k Q'_k$ sub-frames, the UEs in $\arg \max_k Q'_k$ see one arrival and no departure. Every other UE $k'$ see an arrival and no departure for the first $(Q'_{k'}- \min_k Q'_k)$ sub-frames and one arrival and one departure for the remaining $\max_k Q'_k - Q'_{k'}$ sub-frames. At the end of this step, the queue length of UE $k$ is equal to $Q'_k$.\\
$-$ Step 4: In the next sub-frame, there is one arrival and one departure in every queue. This makes the priority weights all equal to $0$ while the queue lengths remain unchanged.\\
$-$ Step 5: In the next $\max_k c'_k$ sub-frames, there is no arrival and no departure for UEs in $\arg \max_k c'_k$. For every other UE $k'$, there is an arrival and a departure in the first $(\max_k c'_k - c'_{k'})$ of these sub-frames and no arrival and no departure in the remaining $c'_{k'}$ sub-frames. At the end of this step, the DTMC is in the desired state $\bf{Q'}$.\\
This defines one finite length path from any state $\bf{Q}$ to any other state ${\bf Q'}$ of length $(\max_k Q_k + \max_k Q'_k +1+\max_k c'_k)$. Hence, the DTMC is irreducible. \\
$\bullet$ \textit{Aperiodic}: Consider state $\bf{Q[t]}$ where all queues are empty and all priority weights are $0$. If there is one arrival in each queue in slot $t+1$ and every queue is served, the queues remain empty and the priority weights remain $0$. Therefore, this state has a self loop and hence has period $1$. Since we have already shown that the DTMC is irreducible, all states have period $1$ because periodicity is a class property. Hence, the DTMC is aperiodic. 
\vspace{-0.7cm}
\end{proof}

% PROOF OF THROUGHPUT OPTIMALITY OG GAMMA_P :
\subsection{Proof of Theorem~\ref{theorem:BP2}} \label{proof:BP2}
\begin{proof}
Let $D^{\Gamma_P}_k[t]$ denote the departure process of queue $k$ under $\Gamma_P$. We have:
 \begin{equation*}
  Q^{\Gamma_P}_k[t+1] = \max\{(Q^{\Gamma_P}_k[t]+A_k[t]-D^{\Gamma_P}_k[t]),0\},
 \end{equation*}
 where $Q^{\Gamma_P}_k[t]$ is the queue length of the token queue of $k$ at time $t$ under $\Gamma_P$. For the sake of simplicity of notation, we omit the $\Gamma_P$ superscript from $Q^{\Gamma_P}_k[t]$ and $D^{\Gamma_P}_k[t]$ through the rest of this section. Since a departure from queue $k$ in sub-frame $t$ means that $\mu^{\Gamma_P}_k[t] = 1$, we can write the above equation as:
 \begin{equation*}
  Q_k[t+1] = \max\{(Q_k[t]+A_k[t]-\mu^{\Gamma_P}_k[t]),0\}.
 \end{equation*}
 Under this policy, the evolution of the state of the queueing system ${\bf Q[t]} = [Q_1[t],\ldots,Q_M[t],\bar{c}[t]]$ forms a DTMC. We have proved in Lemma~\ref{lemma:dtmc2} that this DTMC is countable, irreducible and aperiodic. We now show using Foster's theorem~\cite{foster} that this DTMC is positive recurrent and hence the queues do not grow infinitely. \par
 
Using the following Lyapunov function $f({\bf Q[t]}) = \sum_{k=1}^{M} Q_k^2[t]$, we have:
\begin{align*}
\lefteqn{ f({\bf Q[t+1]})-f({\bf Q[t]})} \\
&= \sum_{k=1}^M [(A_k(t)-\mu^{\Gamma_P}_k[t])^2 + 2Q_k[t](A_k[t] - \mu^{\Gamma_P}_k[t])].
\end{align*}
Hence, as in (\ref{eqn:1}), we have:
\begin{align}
 \lefteqn{ \mathbb{E}[(f({\bf Q[t+1]})-f({\bf Q[t]}))|{\bf Q[t]}] } \nonumber \\
%  &= \mathbb{E}\left[(\sum_{k=1}^M [(A_k(t)-\mu_k[t])^2 + 2Q_k[t](A_k[t] - \mu_k[t])])|\bar{Q}[t]\right], \nonumber \\
%  &\leq M + 2\mathbb{E}\left[(\sum_{k=1}^M Q_k[t]A_k[t] - \sum_{k=1}^M Q_k[t]\mu_k[t])|\bar{Q}[t] \right], \nonumber \\
 \label{eqn:2} &\leq M + 2\sum_{k=1}^M Q_k[t]\lambda_k - 2\mathbb{E}\left[(\sum_{k=1}^M Q_k[t]\mu^{\Gamma_P}_k[t])|{\bf Q[t]} \right].
\end{align}
Let $\mu^{\Gamma_\delta}_k[t]$ denote the service rate for UE $k$ in sub-frame $t$ under the randomized policy $\Gamma_\delta$. Then, from~(\ref{eq:objective_priority}), we have:
\begin{align}
\label{eqn:max_priority} \lefteqn{ \sum_{k=1}^M \left(Q_k[t]\mu^{\Gamma_P}_k[t] +(c_k[t]+1)s\mu^{\Gamma_P}_k[t] \right)} \nonumber \\
&\geq \sum_{k=1}^M \left(Q_k[t]\mu^{\Gamma_\delta}_k[t]+(c_k[t]+1)s\mu^{\Gamma_\delta}_k[t] \right).
\end{align}
Therefore, from (\ref{eqn:2}) and (\ref{eqn:max_priority}):
\begin{eqnarray*}
%  \lefteqn{ 
 &\hspace{-0.6cm} \mathbb{E}[(f({\bf Q[t+1]})-f({\bf Q[t]}))|{\bf Q[t]}] \leq M + 2\sum_{k=1}^M Q_k[t]\lambda_k \\
 &\hspace{-0.6cm}- 2\mathbb{E}\left[(\sum_{k=1}^M Q_k[t]\mu^{\Gamma_\delta}_k[t]+(c_k[t]+1)(\mu^{\Gamma_\delta}_k[t]-\mu^{\Gamma_P}_k[t])s)|{\bf Q[t]} \right], \\
 &\hspace{-0.6cm}\leq M + 2\sum_{k=1}^M Q_k[t]\lambda_k - 2\sum_{k=1}^M Q_k[t](\lambda_k+\delta) \\
 &- 2\mathbb{E}\left[(\sum_{k=1}^M -(\kappa+1)s\right], \\
 &\leq M - 2\sum_{k=1}^M Q_k[t]\delta +4Ms. \ (\text{for} \ \kappa = 1)
\end{eqnarray*}

% \begin{align*}
% %  \lefteqn{ 
%  &\mathbb{E}[(f({\bf Q[t+1]})-f({\bf Q[t]}))|{\bf Q[t]}] \leq M + 2\sum_{k=1}^M Q_k[t]\lambda_k \\
%  &- 2\mathbb{E}\left[(\sum_{k=1}^M Q_k[t]\mu^{\Gamma_\delta}_k[t]+(c_k[t]+1)(\mu^{\Gamma_\delta}_k[t]-\mu^{\Gamma_P}_k[t])s)|{\bf Q[t]} \right], \\
%  &\leq M + 2\sum_{k=1}^M Q_k[t]\lambda_k - 2\sum_{k=1}^M Q_k[t](\lambda_k+\delta) \\
%  &- 2\mathbb{E}\left[(\sum_{k=1}^M -(\kappa+1)s\right], \\
%  &\leq M - 2\sum_{k=1}^M Q_k[t]\delta +4Ms. \ (\text{for} \ \kappa = 1)
% \end{align*}

Defining set ${\cal A} = \{{\bf Q} : \sum_{k=1}^M Q_k \leq \frac{4Ms+M+1}{2\delta}\}$, we have:
\begin{align*}
  \mathbb{E}[(f({\bf Q[t+1]})-f({\bf Q[t]}))|{\bf Q[t]}] < \begin{cases}
          {-1}, & \forall \ {\bf Q[t]} \notin \cal{A},\\
	{\infty}, &  \text{otherwise}.
         \end{cases}
\end{align*}
Thus, by Foster's theorem~\cite{foster}, the DTMC is positive recurrent which means that the expected queue lengths in the queueing system will be finite. Therefore, $\Gamma_P$ stabilizes the system and hence meets the loss requirements of all the UEs.
\end{proof}
\vspace{-0.3cm}

% PROOF OF MWBM :
\subsection{Proof of Lemma~\ref{lemma:mwbm}} \label{proof:mwbm}
\begin{proof}
 A matching for graph ${\cal G}$ selects edges that share no common vertices. This means that each group from $U$ will be matched to exactly one PRB from $V$ and each PRB from $V$ will be matched to at most one group from $U$. Therefore, the requirement of
 assigning no more than $1$ PRB to each group is satisfied. Since PRBs in $V$ are matched to no more than one group from $U$, we will have $B^\Gamma_i[t] \neq B^\Gamma_{i'}[t] \ \forall \ \{i,i' \in [L] : B^\Gamma_i[t], B^\Gamma_{i'}[t] \neq 0\}$ as
 required by Definition~\ref{def:RA}. Thus, the solution of the MWBM gives us a feasible resource allocation. Next, we show that the resulting allocation is consistent with the allocation decisions that would be taken by policy $\Gamma_0$. \par
 
 MWBM picks edges such that the sum of the weights of the edges chosen is maximized. Therefore, it maximizes the quantity $\sum_{i \in U}\sum_{k \in G_i} Q_k[t]\mu_k^j[t] = \sum_{k=1}^{M} Q_{k}[t] \mu_{k}[t]$ which is same as in~(\ref{eqn:mw_objective}). 
 Hence, resource allocation done using MWBM on ${\cal G}$ is consistent with policy $\Gamma_0$.
 \vspace{-0.5cm}
 \end{proof}

\bibliographystyle{ieeetr}
% \nocite{*}
\addcontentsline{toc}{chapter}{Bibliography}
\bibliography{backpressure}

% % % \bibliographystyle{ieeetr}
% % % \vspace{-0.2cm}
% % % \nocite{*}
% % % \addcontentsline{toc}{chapter}{Bibliography}
% % % \bibliography{myrefs}

% that's all folks
\end{document}